\definecolor{eric}{HTML}{99c0ff}
\definecolor{matthieu}{HTML}{fff000}
\definecolor{antoine}{HTML}{9fdf9f}
\newtheorem*{thm*}{Theorem}
\newtheorem{thm}{Theorem}
\newtheorem{prop}{Proposition}
\newtheorem{defn}{Definition}
\newtheorem{lem}{Lemma}
\newtheorem{cor}{Corollary}
\newenvironment{thm-apdx}[1]
  {\thmapdx}
  {\endthmapdx}
\newcommand\UU{\mathcal{U}}
\newcommand\SC{\mathcal{S}}
\newcommand\MM{\mathbb{M}}
\newcommand\Idx{\mathsf{Idx}\,}
\newcommand\Cns{\mathsf{Cns}\,}
\newcommand\Typ{\mathsf{Typ}\,}
\newcommand\Pos{\mathsf{Pos}\,}
\newcommand\Id{\mathsf{Id}}
\newcommand\Pb{\mathsf{Pb}\,}
\newcommand\Slice{\mathsf{Slice}\,}
\newcommand{\ooGrp}{\infty\mhyphen\mathsf{Grp}}
\newcommand{\preooCat}{\mathsf{pre}\mhyphen\infty\mhyphen\mathsf{Cat}}
\newcommand{\dsum}[1]{\textstyle{\sum_{(#1)}}\,}
\newcommand{\dprod}[1]{\textstyle{\prod_{(#1)}}\,}
\mathchardef\mhyphen="2D
\newcommand\refl{\mathsf{refl}}
\newcommand\ttt{\mathsf{tt}}
\newcommand\ctr{\mathsf{ctr}\,}
\newcommand\wit{\mhyphen\mathsf{wit}}
\newcommand\coh{\mhyphen\mathsf{coh}}
\newcommand\alg{\mhyphen\mathsf{alg}}
\newcommand{\da}{{\downarrow}}
\newcommand\MMd{\mathbb{M\da}}
\newcommand\Idxd{\mathsf{Idx}\da\,}
\newcommand\Cnsd{\mathsf{Cns}\da\,}
\newcommand\Typd{\mathsf{Typ}\da\,}
\newcommand\Posd{\mathsf{Pos}\da\,}
\newcommand\upetad{\upeta\da}
\newcommand\upmud{\upmu\da}
\newcommand\lfd{\operatorname{lf\da}}
\newcommand\ndd{\operatorname{nd\da}}
\newcommand\Idd{\mathsf{Id\da}\,}
\newcommand\Pbd{\mathsf{Pb\da}\,}
\newcommand\Sliced{\mathsf{Slice\da}\,}
\newcommand\Unit{\top}
\newcommand\Empty{\bot}
\newcommand\botelim{\bot\mhyphen\mathsf{elim}}
\newcommand\etapos{\upeta\mhyphen\mathsf{pos}\,}
\newcommand\etaposelim{\upeta\mhyphen\mathsf{pos}\mhyphen\mathsf{elim}\,}
\newcommand\etadec{\upeta\mhyphen\mathsf{dec}\,}
\newcommand\etadecd{\upeta\mhyphen\mathsf{dec}\da\,}
\newcommand\mupos{\upmu\mhyphen\mathsf{pos}\,}
\newcommand\muposfst{\upmu\mhyphen\mathsf{pos}\mhyphen\mathsf{fst}\,}
\newcommand\mupossnd{\upmu\mhyphen\mathsf{pos}\mhyphen\mathsf{snd}\,}
\newcommand\lf{\mathsf{lf}\,}
\newcommand\nd{\mathsf{nd}\,}
\newcommand\fst{\mathsf{fst}\,}
\newcommand\snd{\mathsf{snd}\,}
\newcommand\inl{\mathsf{inl}\,}
\newcommand\inr{\mathsf{inr}\,}
\newcommand\iscontr{\mathsf{is}\mhyphen\mathsf{contr}\,}
\newcommand\ismult{\mathsf{is}\mhyphen\mathsf{mult}\,}
\newcommand\isfibrant{\mathsf{is}\mhyphen\mathsf{fibrant}\,}
\newcommand\carmult{\mathsf{car}\mhyphen\mathsf{is}\mhyphen\mathsf{mult}\,}
\newcommand\relfib{\mathsf{rel}\mhyphen\mathsf{is}\mhyphen\mathsf{fibrant}\,}
\newcommand\isalgebraic{\mathsf{is}\mhyphen\mathsf{algebraic}\,}
\newcommand\OpType{\mathsf{OpetopicType}\,}
\newcommand\OvrOpType{\da\mathsf{OpType}\,}
\newcommand\Car{\mathcal{C}\,}
\newcommand\Rel{\mathcal{R}\,}
\newcommand{\commentt}[1]{}
\begin{document}

\setlength{\abovedisplayskip}{6pt}
\setlength{\belowdisplayskip}{6pt}

\title{Types Are Internal $\infty$-Groupoids\\(Extended Version)}
\author{
  \IEEEauthorblockN{Eric Finster}
  \IEEEauthorblockA{Cambridge University\\
    Department of Computer Science \\
    ericfinster@gmail.com
  }
  \and
  \IEEEauthorblockN{Antoine Allioux}
  \IEEEauthorblockA{Inria \&
   IRIF, Université de Paris\\
   France\\
   antoine.allioux@irif.fr}
  \and
  \IEEEauthorblockN{Matthieu Sozeau}
  \IEEEauthorblockA{Inria \&
    LS2N, Université de Nantes\\
    France\\
    matthieu.sozeau@inria.fr}
}

\maketitle

\begin{abstract}
  By extending type theory with a universe of definitionally
  associative and unital polynomial monads, we show how to arrive at a
  definition of \emph{opetopic type} which is able to encode a number
  of fully coherent algebraic structures.  In particular, our approach
  leads to a definition of $\infty$-groupoid internal to type theory
  and we prove that the type of such $\infty$-groupoids is equivalent
  to the universe of types.  That is, every type admits the structure
  of an $\infty$-groupoid internally, and this structure is unique.
\end{abstract}

%
\IEEEpeerreviewmaketitle

\section{Introduction}

Homotopy Type Theory has brought a new perspective to intensional
Martin-L\"{o}f type theory: the higher identity types of a type endow
it with the structure of an \emph{$\infty$-groupoid}, and ideas from
homotopy theory provide us with a means to predict and understand the
resulting tower of identifications.  While this perspective has been
enormously clarifying with respect to our understanding of the notion
of proof-relevant equality, leading, as it has, to a new class of
models as well as new computational principles, a number of
difficulties remain in order to complete the vision of type theory as
a foundation for a new, structural mathematics based on
homotopy-theoretic and higher categorical principles.

Foremost among these difficulties is the following: how does one
describe a well behaved theory of \emph{algebraic structures} on
arbitrary types?  The fundamental difficulty in setting up such a
theory is that, in a proof relevant setting, nearly all of the
familiar algebraic structures (monoids, groups, rings, and categories,
to take a few) become \emph{infinitary} in their presentation.
Indeed, the axioms of these theories, which take the form of a finite
list of mere \emph{properties} when the underlying types are sets,
constitute additional \emph{structure} when they are no longer assumed
to be so.  Consequently, in order to arrive at a well-behaved theory, the
axioms themselves must be subject to additional axioms, frequently
referred to generically as ``coherence conditions''. In short, in a
proof relevant setting, it no longer suffices to describe the
equations of an algebraic structure at the ``first level'' of
equality.  Rather, we must specify how the structure behaves
\emph{throughout the entire tower} of identity types, and this is an
infinite amount of data.  How do we organize and manipulate this data?

Similar problems have arisen in the mathematics of homotopy theory and
higher category theory, and many solutions and techniques are known.
Bafflingly, however, all attempts to import these ideas into plain
homotopy type theory have, so far, failed.  This appears to be a
result of a kind of circularity: all of the known classical techniques
at \emph{some} point rely on set-level algebraic structures themselves
(presheaves, operads, or something similar) as a means of presenting
or encoding higher structures.  Internally to type theory, however, we
do not have recourse to such techniques.  Indeed, without further
hypotheses, we do not even expect that the most basic objects of the
theory, types themselves, are presented by set-level structures. This
leaves us in a strange position: absent a theory of algebraic
structures, we have nothing to use to encode algebraic structures!

We suggest that a possible explanation for this phenomenon is the
following: contrary to our experience with set-level mathematics,
where an algebraic structure (i.e. a ``structured set'') can itself be
defined just in terms of sets: underlying sets, functions, sets of
relations and so on, when we pass to the world of homotopy theoretic
mathematics, the notion of \emph{type} and \emph{structured type} are
simply no longer independent of each other in the same way.
Consequently, some primitive notion of structured type must be defined
\emph{at the same time} as the notion of type itself.  The present
work is a first attempt at rendering this admittedly somewhat vague
idea precise.

We begin by imagining a type theory which, in addition to defining a
universe $\UU$ of \emph{types}, defines at the same time a universe
$\SC$ of \emph{structures}. Of course, we will need to be somewhat
more precise about what exactly we mean by \emph{structure}.  Category
theory suggests that one way of representing a structure is by the
monad on $\UU$ which it defines, so we might think of $\SC$ as a
universe of monads.  In practice, however, it will be useful to
restrict to a particularly well behaved class of monads, having
reasonable closure properties, and for which we have a good
understanding of their higher dimensional counterparts.  We submit
that a reasonable candidate for such a well-behaved collection is the
class of \emph{polynomial monads}~\cite{GK}.

We feel that this is an appropriate class of structures for a number
of reasons.  A first reason is that this class of monads arises quite
naturally in type theory already: indeed, a large literature exists on
the interpretation of inductive and coinductive types as initial
algebras and terminal coalgebras for polynomial monads, and we
consider our work as deepening and extending this connection.
Furthermore, this class of algebraic structures enjoys some pleasant
properties which make them particularly amenable to ``weakening''.
For example, the very general approach to weakening algebraic
structures developed by Baez and Dolan in \cite{BD98} can be smoothly
adapted to the polynomial case. While the cited work employs the
language of symmetric operads, connections with the theory of
polynomial functors were already described in \cite{KJBM}, and
moreover, recent work \cite{gepner2017infty} has shown that, in type
theory, we should expect symmetric operads to in fact be
\emph{subsumed} by the theory of polynomial monads.

The central intuition of Baez and Dolan's approach, is that each
polynomial monad $M$ determines its \emph{own} higher dimensional
collection of shapes (the $M$-opetopes) generated directly from the
syntactic structure of its terms.  They go on to introduce the notion
of an \emph{$M$-opetopic type} which is, roughly, a collection of well
formed decorations of these shapes, and the notion of weak $M$-algebra
is then defined as an $M$-opetopic type satisfying certain closure
properties.  In this sense, their approach differs from, say,
approaches based on simplices, cubes or spheres in that the geometry
is not fixed ahead of time, but adapted to the particular structure
under consideration.

With these considerations in mind, our plan in the present work is to
put the idea of a type theory with primitive structures to the test.
What might it look like, and what might it be able to prove?  In order
to answer these questions, we will build a prototype of the theory
\footnote{The Agda source is available here:
  \url{https://github.com/ericfinster/opetopic-types}} in the proof
assistant Agda, and exploit the recent addition of \emph{rewrite
  rules} \cite{cockx:hal-02901011} which permits us to extend
definitional equality by new well-typed reductions.  The use of such
rewrites is necessary to ensure that our primitive structures are not
subject to the same infinite regress of coherence conditions which has
so far obstructed more naive attempts and defining such objects.

Concretely, we will introduce a universe $\MM$,\footnote{We mean by
  this notation to distinguish the universe $\MM$ of this
  \emph{particular} implementation from the generic idea of a universe
  of structures $\mathcal{S}$, the properties of which we expect to be
  refined by further investigation.} whose elements we think of as
codes for polynomial monads and describe the structures they decode
to.  Because we think of the objects of the universe $\MM$ as
primitives of our theory, on the same level as types, we allow
ourselves the freedom to prescribe their computational behavior: in
particular, we will equip them with definitional associativity and
unit laws using the rewrite mechanism alluded to above.  We emphasize
that if structures are taken as defined in parallel with types, then
this kind of definitional behavior should be no more surprising than,
say, the definitional associativity of function composition.

We then show how the existence of our universe $\MM$ has some strong
consequences.  In particular, it allows us to implement the Baez and
Dolan definition of opetopic type alluded to above, and subsequently
to define a number of weak higher dimensional structures.  Among the
structures which we are able to define using this technique are
$\mathbb{A}_\infty$-monoids and groups, $(\infty,1)$-categories and
presheaves over them (in particular, our setup leads to a definition
of \emph{simplicial type}), and as a special case, $\infty$-groupoids
themselves.

There arises, then, the problem of justifying the correctness of our
definitions.  In order to do so, we will take up the example of
$\infty$-groupoids in some detail.  Indeed, since the homotopical
interpretation of type theory asserts that types should ``be''
$\infty$-groupoids, it seems natural to compare these two objects.
Our main result is the following:
\begin{thm*}
  There is an equivalence
  \[ \UU \simeq \infty\mhyphen\mathsf{Grp} \]
\end{thm*}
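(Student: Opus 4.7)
The plan is to produce an equivalence by constructing maps in both directions and showing that one composite is the identity strictly (by unfolding) while the other is an equivalence via a uniqueness theorem for the $\infty$-groupoid structure. Let me write $U : \ooGrp \to \UU$ for the \emph{underlying type} map, sending an opetopic $\infty$-groupoid to its type of $0$-cells (the carrier of its lowest dimensional slice), and $G : \UU \to \ooGrp$ for the \emph{canonical} $\infty$-groupoid associated to a type. To build $G(A)$, I would use the tower of identity types $A$, $\Id_A(a,b)$, $\Id_{\Id_A(a,b)}(p,q)$, \ldots, organized into a decoration of the opetopes generated by the identity polynomial monad: at each dimension, the positions of an opetope are filled by iterated identifications, and the algebraic/fibrant condition is witnessed by standard path-induction terms (composition, inversion, whiskering, etc.\ all come from $J$).

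The easy direction is $U \circ G = \mathrm{id}_\UU$: by construction of $G$, the underlying type of $G(A)$ is literally $A$, and with the definitional rewrites of $\MM$ this should hold on the nose (or at worst up to a canonical propositional equality). The work is in the other direction, namely $G \circ U \simeq \mathrm{id}_{\ooGrp}$. Rather than trying to compare two opetopic types cell-by-cell, I would reduce this to a uniqueness statement of the following form: for every type $A$, the fiber of $U$ over $A$, i.e.\ the type of opetopic $\infty$-groupoid structures with carrier $A$, is contractible, with $G(A)$ providing the center of contraction. Given this, the projection $U$ is an equivalence and its chosen section is $G$.

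The main obstacle, therefore, is proving contractibility of the fibers of $U$. I would attack this by dimension-by-dimension induction, exploiting both axioms defining an $\infty$-groupoid in the paper's sense. The $\isfibrant$ condition says that every "niche" admits a filler that is essentially unique (contractible space of fillers), which plays the role of the classical Kan condition; the $\isalgebraic$ condition pins down a chosen filler. Together these should imply that, once the carrier $A$ and its iterated identity types are given at dimension $0$ (which they are, by definition of $G$), all higher cells, compositions, and coherences are determined up to contractible choice. The crucial inductive step is then: assuming the structure agrees with $G(A)$ up through dimension $n$, the fibrancy+algebraicity at dimension $n+1$ forces the $(n+1)$-cells to be given by iterated identifications and the $(n+1)$-fillers to be given by the canonical path-induction fillers, up to a contractible space of choices.

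Finally, I would assemble these local contractibilities into a global contractibility of the fiber using the fact that an opetopic type is an inverse limit of its finite-dimensional truncations; a dependent-sum of contractible types over a well-founded indexing is contractible, yielding the desired equivalence. I expect the hardest technical point to be the inductive step verifying that the algebraic choice of filler prescribed by an arbitrary $\infty$-groupoid structure necessarily coincides (up to a coherent identification) with the one coming from $J$; this is where the definitional associativity and unit laws on $\MM$ should pay off, since they eliminate the bureaucratic coherences that would otherwise obstruct a direct comparison.
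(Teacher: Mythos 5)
Your overall architecture matches the paper's: a map $\Gamma : \UU \to \ooGrp$ built from the iterated slice/pullback tower seeded by the identity type family, the observation that the underlying-type composite $\Upsilon \circ \Gamma$ is definitionally the identity, and a uniqueness argument for the other composite. But there are genuine gaps at exactly the points you flag as hard. First, your inductive step needs a concrete mechanism for showing that the $1$-cells of an \emph{arbitrary} fibrant $X : \OpType \Id$ are forced to be identifications, and ``fibrancy plus algebraicity'' will not supply it as stated: $\isalgebraic$ is a property of monad \emph{extensions}, used only to construct $\Gamma$ and prove its output fibrant; it is not part of the definition of $\ooGrp$, so an arbitrary $G$ carries only $\isfibrant$. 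The actual mechanism is that multiplicativity of $X_1$, applied at the unit constructor, makes $\sum_{y} \mathsf{Arrow}\,x\,y$ contractible, so the fundamental theorem of identity types (\cite[Theorem 5.8.2]{hottbook}) yields $\mathsf{Arrow}\,x\,y \simeq (x \equiv y)$. This is the equivalence $e_1$ that the paper feeds, together with a check that the composition induced by the next level of fibrancy is sent to path composition, into its uniqueness theorem. Without this step your claim that the higher cells ``are given by iterated identifications'' has no proof; the definitional laws on $\MM$ remove bureaucracy but do not produce this equivalence.

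Second, the assembly. Opetopic types here are coinductive records, not inverse limits of finite truncations, so ``a dependent sum of contractibles over a well-founded indexing is contractible'' does not directly apply, and even recast as a sequential-limit statement it would need function extensionality and a careful transport of the contractibility data along the equivalences already built. The paper instead proves a single coinductive theorem: given $e_0$, $e_1$ and the homomorphism data, there is an equivalence of opetopic types $X \simeq_o \OvrOpType M\, M\da$, where the coinductive step is discharged by the fact that \emph{every} monad extension becomes algebraic after one application of the slice construction --- so past level one both sides are determined by contractibility and no further hypotheses accumulate. You should either reorganize your dimension-by-dimension induction into such a coinductive equivalence of opetopic types, or actually prove that the type of fibrant structures on a fixed carrier is a sequential limit and that fiberwise contractibility passes to it; as written, the limit step is asserted rather than established.
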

In other words, every type admits the structure of an
$\infty$-groupoid in our sense, and that structure is unique.  This
theorem, therefore, can be regarded as a (constructive)
internalization of the intuition provided by the various
meta-theoretic results to this effect~\cite{van2011types,
  lumsdaine2009weak}.

\subsection{Related Work}
\label{sec:related-work}

The so-called \emph{coherence problem}, which is the main motivation
for the present work, has been considered by a number of authors.  We
briefly compare our approach with two prominent other strains of
thought.

\subsubsection{Synthetic Structures}

One way to avoid some of the problems posed by the definition of
higher dimensional structures is to simply enlarge the collection of
basic objects to include them.  Such approaches may be described as
synthetic, in that they do not reduce higher structures to more
primitive objects in exactly the same way that homotopy type theory
itself does not define $\infty$-groupoids in terms of sets.  This
point of view is often adopted, for example, in the research into
\emph{directed type theories}, whether they be aimed at specific
structures like higher categories as in \cite{north2019towards}, or
allow for more general directed spaces as in \cite{riehl2017type}.

While our theory does indeed add some new primitive structures to type
theory, we collect these structures in a universe and decode them into
collections of ordinary types and maps between them.  Moreover, we go
on to use these additional strict structures to give analytic,
internal definitions of higher structures.

\subsubsection{Two-Level Type Theory}

Perhaps the closest related work to the current approach is that of
Two-Level Type Theory \cite{DBLP:journals/corr/AnnenkovCK17}.  There
it is advocated to add a second ``level'' to type theory with a
set-truncated equality type which one can then use to make
meta-theoretic statements about the inner level, whose objects are
typically taken to be the homotopically meaningful ones.  The
two-level approach provides, thus, a great deal of generality and
flexibility at the cost of restricting the applicability of
homotopical interpretation of types to the inner theory.  It is
likely, for example, that our theory could be developed inside a
two-level system and many of the rewrites we employ proven as theorems
in the outer level.  By constrast, our approach is, we feel, somewhat
more economical, extending the theory with a specific set of rewrite
rules, and pointing towards the possibility of a useful theory of
higher structures without the need to restrict homotopical principles
like univalence.

\subsection{Preliminaries}

The basis of our metatheory is the type theory implemented in
Agda~\cite{agda} which is an extension of the predicative part of
Martin-Löf type theory~\cite{ML75}. Among the particular types that
Agda implements, we shall use inductive types, records and coinductive
records.

As such, we adopt a style similar to Agda code, writing
$(x : A) \to B\, x$ for the dependent product (although we
occasionally also employ the $\prod_{(x : A)} B\, x$ notation if it
improves readability).  We also make use of the implicit counterpart
of the dependent products, written $\{x : A\} \to B\, x$.  This allows
us to hide arguments which can be inferred from the context and hence
clarify our notation.  Non-dependent functions are denoted $A \to B$,
as usual. Functions enjoy the usual $\eta$ conversion rule.

We shall make extensive use of coinductive record types, as well as
copatterns for producing elements of these types.  We write $\Unit$
for the empty record with a constructor $\ttt : \Unit$. We write
$\dsum{x : A} B\ x$ for the dependent sum as a record with constructor
$\_, \_$ and projections $\fst$ and $\snd$. Pairs which are not
dependent are denoted $A \times B$.

We write $\bot$ for the empty type, using absurd patterns where
appropriate, and writing $\botelim$ for the unique function for $\bot$
to any type.

The identity type $\_\equiv\_ : \{A : \UU\}\ (x\ y : A) \to \UU$ is
an inductive type with one constructor
$\refl : (x : A) \to x \equiv x$.

We shall make use of the notion of contractible type denoted
$\iscontr$ whose center of contraction will be referred to as
$\ctr$. Other notions defined in the HoTT book~\cite{hottbook} will be
employed including equivalence of types denoted $X \simeq Y$, function
extensionality denoted $\mathsf{funext}$, as well as the Univalence
Axiom.

We write $\UU$ for the universe of small types, and $\UU_1$ for the
next universe when necessary.

In order to unclutter and clarify the presentation, we
occasionally take liberties with the formal definitions, for example,
silently inserting applications to functional extensionality when
necessary, or reassociating $\Sigma$-types in order to avoid a
proliferation of commas. Our formal development in Agda allows no
such informalities to remain.


\section{A Universe of Polynomial Monads}

As we have explained in the introduction, type theory appears to lack
the ability to speak about infinitely coherent algebraic structures,
and our strategy for addressing this problem will be to distinguish a
collection of such structures which we consider as defined by the
theory itself.  We do so using a common technique in the type theory
literature: that of introducing a \emph{universe}.  We write
$\MM : \UU$ for our universe, and we think of its elements as
\emph{codes for polynomial monads}.  Just as a typical type theoretic
universe has some collection of base types and some collection of type
constructors, so we will populate our universe with a collection of
``base monads'' and ``monad constructors''.  In other words: we
will have a syntax of structures which parallels the syntax for types.

Typically, a universe of types $\mathbb{U}$ comes equipped with a
decoding function $El : \mathbb{U} \to \UU$.  In the case of our
universe of monads $\MM$, each element $M : \MM$ will decode not to a
single type, but to a collection of types and type families equipped
with some structure.  We will use rewrite rules to specify
the computational behavior of this structure.

\subsection{Polynomial Structure}

To begin, we first equip each $M : \MM$ with an underlying
\emph{polynomial} or \emph{indexed container} \cite{DBLP:journals/jfp/AltenkirchGHMM15}.
This is accomplished by postulating the following collection of decoding functions:
\begin{align*}
  \Idx &: \MM \to \UU\\
  \Cns &: (M : \MM) \to \Idx M \to \UU\\
  \Pos &: (M : \MM)\ \{i : \Idx M\} \to \Cns M\, i \to \UU \\
  \Typ &: (M : \MM)\ \{i : \Idx M\}\ (c : \Cns M i) \\
       &\to \Pos M\, c \to \Idx M 
\end{align*}
Polynomials of this sort appear in the computer science literature as
the ``data of a datatype declaration''.  They can equivalently be seen
as a way to describe the signature of an algebraic theory: concretely,
the elements of $\Idx M$, which we refer to as \emph{indices} serve as
the sorts of the theory, and for $i : \Idx M$, the type $\Cns M\, i$
is the collection of operation symbols whose ``output'' sort is $i$.
The type $\Pos M\, c$ then assigns to each operation a collection of
``input positions'' which are themselves assigned an index via the
function $\Typ$.

It follows that every monad $M$ induces a functor
$[\_] : (\Idx M \to \UU) \to (\Idx M \to \UU)$ called its
\emph{extension} given by
\begin{align*}
  [ M ]\, X\, i &= \sum_{(c : \Cns M\, i)} (p : \Pos M\, c) \to X \, (\Typ M\, c\ p)
\end{align*}
We may think of the value of this functor at a type family
$X : \Idx M \to \UU$ as the type of \emph{constructors of $M$ with
  inputs decorated by elements of $X$}.  Indeed, we will frequently
refer to a dependent function of the form
\[ (p : \Pos M\, c) \to X \, (\Typ M\, c\ p) \] where $X$ is as above,
as a \emph{decoration} of $c$ by elements of $X$.

\subsection{Monadic Structure}
\label{sec:mnd-struct}

Next, for each monad $M : \MM$, we are going to equip the
underlying polynomial of $M$ with an algebraic structure:
specifically, that structure required on the underlying
polynomial so that the associated extension $[ M ]$ becomes
a \emph{monad}.  In the case at hand, this takes the form
of a pair of functions
\begin{align*}
  \upeta &: (M : \MM)\ (i : \Idx M) \to \Cns M\, i\\
  \upmu &: (M : \MM)\ \{i : \Idx M\}\ (c : \Cns M\, i) \\
         &\to (\delta : (p : \Pos M\, c) \to \Cns M\, (\Typ M\, c\ p)) \\
         &\to \Cns M\, i
\end{align*}
which equip $M$ with a multiplication and unit operation.  We remark
that the second argument $\delta$ of the multiplication $\upmu$ is a
decoration of $c$ in the family $\Cns M$ of constructors, so that we
can think of the input to this function as a ``two-level'' tree.

Crucial for what follows will be that the monads we consider are
\emph{cartesian} in the sense of \cite{GK}.  Type theoretically, the
means we require each monad $M$ to come equipped with equivalences
\begin{align*}
  \Pos M\, (\upeta\ M\, i) &\simeq \Unit \\
  \Pos M\, (\upmu\ M\, c\, \delta) &\simeq \dsum{p : \Pos M\, c} \Pos M\, (\delta\, p)
\end{align*}
Since we are already modifying the definitional equality of our type
theory, it may be tempting to require these equivalences
definitionally by asserting that the type of positions reduces when
applied to constructors of the appropriate form. However, this will
not work: as we will see below, when we come to populate our universe
with concrete monads and monad constructors, the equivalences we find
are often in fact not definitional, even if they remain provable.  As
an alternative, we will equip each monad with introduction,
elimination and computation rules for its positions which will in
effect guarantee that we always have the required equivalence.  Each
monad definition will then be required to implement these rules in a
manner consistent with the various required typing laws.

In the case of $\upeta$, for example, we postulate introduction and
elimination rules of the form
\begin{align*}
  \etapos &: (M : \MM)\ (i : \Idx M) \to \Pos M\, (\upeta\ M\, i)\\
  \etaposelim &: (M : \MM)\ (i : \Idx M) \\
              &\to (X : (p : \Pos M\, (\eta\ M\, i)) \to \UU)\\ 
              &\to (u : X\ (\etapos M\, i)) \\
          &\to (p : \Pos M\, (\upeta\ M\, i)) \to X\ p
\end{align*}
with typing rule
\begin{equation}
  \label{rewrite:etaTyp}
  \tag{Typ-$\eta$}
  \begin{aligned}
    \Typ M\, (\upeta\ M\, i)\ p & \leadsto i 
  \end{aligned}
\end{equation}
and computation rule
\begin{align*}
  \etaposelim\ M\, i\ X\ u\ (\etapos M\, i) &\leadsto u \\
\end{align*}
Notice these are exactly the rules for an inductively defined indexed unit
type.\footnote{In principle, we would also like to have an
  $\eta$-rule for the unit $\upeta$, (that is, we would prefer the
  negative version as we have below for $\upmu$) but unfortunately
  this is not possible with the current implementation of rewriting in
  Agda.}  In particular, decorations of the constructor $\upeta\, M\, i$
in a type family $X : \Pos M\, (\upeta\, M\, i) \to \UU$ are completely determined
by a single element $x : X\, i$, a fact which we record in the following
definition to reduce clutter below:
\begin{align*}
  \etadec M\, X\, \{i\}\, x = \etaposelim M\, i\, (\lambda\, \_ \to X i)\, x 
\end{align*}

Next, for the multiplication $\upmu$, our rules simply mimic the
pairing and projections of the dependent sum.  That is, we
postulate an introduction rule
\begin{align*}
  \mupos &: (M : \MM)\ \{i : \Idx M\}\ \{c : \Cns M\, i\}\ \\
          &\to \{\delta : (p : \Pos M\, c) \to \Cns M\, (\Typ M\, c\ p)\}\\ 
          &\to (p : \Pos M\, c) \to (q : \Pos M\, (\delta\ p))\\
         &\to \Pos M\, (\upmu\ M\, c\ \delta)
\end{align*}
and elimination rules
\begin{align*}
  \muposfst &: (M : \MM)\ \{i : \Idx M\}\ \{c : \Cns M\, i\}\\
            &\to \{\delta : (p : \Pos M\, c) \to \Cns M\, (\Typ M\, c\ p)\}\\
            &\to \Pos M\ (\upmu\ M\, c\ \delta) \to \Pos M\, c\\
  \mupossnd &: (M : \MM)\ \{i : \Idx M\}\ \{c : \Cns M\, i\}\\
            &\to \{\delta : (p : \Pos M\, c) \to \Cns M\, (\Typ M\, c\ p)\}\\
            &\to (p : \Pos M\, (\upmu\ M\, c\ \delta))\\
            &\to \Pos M\, (\delta\ (\muposfst M\, p))
\end{align*}
with typing rule
\begin{align*}
  \label{rewrite:muTyp}
  \tag{Typ-$\mu$}
  \Typ M\, &(\upmu\ M\, c\ \delta)\ p \leadsto \\
           &\Typ M\, (\delta\ (\muposfst M\, p))\ \\
           &\hspace{1.5cm} (\mupossnd M\, p)
\end{align*}
and computation rules
\begin{align*}
  \label{rewrite:muposfst}
  \tag{$\mu$-pos-fst}
  \muposfst M\, (\mupos M\, p\ q) &\leadsto p\\
  \label{rewrite:mupossnd}
  \tag{$\mu$-pos-snd}
  \mupossnd M\, (\mupos M\, p\ q) &\leadsto q\\
  \label{rewrite:muposeta}
  \tag{$\mu$-pos-$\eta$}
  \mupos M\, (\muposfst M\, p)\, (\mupossnd M\, p) &\leadsto p \\
\end{align*}
With the handling of positions in place, we can now state the
unitality and associativity axioms for the monads in our universe.
These take the form of reductions:
\begin{align*}
  \tag{$\mu$-$\eta$-r}
  \label{rewrite:muetar}
  &\upmu\ M\, c\ (\lambda\ p \to \upeta\ M\, (\Typ M\, c\ p)) \leadsto c\\
  \tag{$\mu$-$\eta$-l}\label{rewrite:muetal}
  &\upmu\ M\, (\upeta\ M\, i)\ \delta \leadsto \delta\ (\etapos M\, i)\\
  \tag{$\mu$-$\mu$}\label{rewrite:mumu}
  &\upmu\ M\, (\upmu\ M\, c\ \delta)\ \epsilon \leadsto \\
  &\quad \upmu\ M\ c\ (\lambda\ p \to\ \upmu\ M\, (\delta\ p)\ (\lambda\ q \to \epsilon\ (\mupos M\, p\ q)))
\end{align*}
Additionally, we must posit laws which assert that the constructors
and eliminators for positions are compatible with these equations.  We
omit these for brevity, but the interested reader may consult the
development for details.

While we will not undertake an extensive investigation of the
meta-theoretic properties of our system in this article, we wish to
pause briefly to make at least of few observations to justify its
well-formedness.  For example, there are critical pairs in the rewrite
equations for the monad laws (between the first equation and the
others) so we need to ensure confluence and termination.
\begin{lem}[Strong confluence for $\eta$ and $\mu$]
  The rewrite rules are strongly confluent \cite{huet80}, hence globally confluent.
\end{lem}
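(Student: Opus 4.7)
The plan is to apply Huet's critical pair lemma in its strong form: if every critical pair $(b_1,b_2)$ can be joined with $b_1 \to^{=} c$ and $b_2 \to^{=} c$ for a common reduct $c$, then the system is globally confluent regardless of termination. It therefore suffices to enumerate the critical pairs arising between the left-hand sides of the rewrites and check that each joins in at most one step from either side.

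It is useful to split the rules into two layers. The \emph{positional} layer consists of Typ-$\eta$, Typ-$\mu$, $\mu$-pos-fst, $\mu$-pos-snd, $\mu$-pos-$\eta$, together with the computation rule for $\etaposelim$; these have the shape of the reductions of an indexed unit type and a $\Sigma$-type, so their pairwise overlaps are either trivially joinable or entirely analogous to the familiar unit/$\Sigma$ reductions and present no difficulty. The \emph{structural} layer consists of $\mu$-$\eta$-r, $\mu$-$\eta$-l and $\mu$-$\mu$, and essentially all the work lies in their critical pairs, read modulo the positional rules.

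I would enumerate the structural critical pairs as: (a) $\mu$-$\eta$-r vs.\ $\mu$-$\eta$-l, at $\upmu\ M\ (\upeta\ M\ i)\ (\lambda\ p \to \upeta\ M\ (\Typ M\ (\upeta\ M\ i)\ p))$, both sides joining at $\upeta\ M\ i$ via Typ-$\eta$ and the $\etapos$ eliminator; (b) $\mu$-$\eta$-r vs.\ $\mu$-$\mu$, at $\upmu\ M\ (\upmu\ M\ c\ \delta)\ (\lambda\ p \to \upeta\ M\ (\Typ M\ (\upmu\ M\ c\ \delta)\ p))$; (c) $\mu$-$\eta$-l vs.\ $\mu$-$\mu$, in the two configurations where the inner $\upmu$ of the $\mu$-$\mu$ redex has $\upeta$ either as its constructor or in its decoration; and (d) $\mu$-$\mu$ against itself, where the two $\mu$-$\mu$ redexes nest either through the constructor or through the decoration. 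For each pair I would perform both parallel reductions, use Typ-$\mu$ together with $\mu$-pos-fst and $\mu$-pos-snd to align the positions that appear in the two output terms, and check that at most one further structural step yields a common reduct.

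The main obstacle is expected to be pair (b). Reducing the overlap via $\mu$-$\mu$ yields a doubly nested $\upmu$ whose innermost decoration has the form $\lambda\ q \to \upeta\ M\ (\Typ M\ (\upmu\ M\ c\ \delta)\ (\mupos M\ p\ q))$; firing Typ-$\mu$, $\mu$-pos-fst and $\mu$-pos-snd under the binders rewrites this to $\lambda\ q \to \upeta\ M\ (\Typ M\ (\delta\ p)\ q)$, which is the left-hand side of $\mu$-$\eta$-r pointwise in $p$. One inner $\mu$-$\eta$-r collapses the middle layer to $\delta\ p$, and one outer $\mu$-$\eta$-r then recovers $\upmu\ M\ c\ \delta$, matching the direct reduct. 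Justifying that this whole collapse counts as a single structural step on the $\mu$-$\mu$ side---because the positional rewrites are forced by typing and hence absorbed into definitional conversion---is the crux of the argument. Once that is in place, the remaining pairs (a), (c), (d) are straightforward variants of the same calculation, and strong, hence global, confluence follows from Huet's lemma.
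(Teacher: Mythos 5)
Your proposal follows essentially the same route as the paper: Huet's strong-confluence criterion applied to the critical pairs among the monad laws, with the $\mu$-$\eta$-r/$\mu$-$\eta$-l pair joined via (Typ-$\eta$) and the $\mu$-$\eta$-r/$\mu$-$\mu$ pair (the one the paper defers to its appendix) resolved exactly as you describe, by pushing (Typ-$\mu$) and the $\mupos$ projection rules under the binders until the inner decoration becomes a pointwise $\mu$-$\eta$-r redex. One small correction: after the inner $\mu$-$\eta$-r fires, the term is already $\upmu\ M\, c\ (\lambda\, p \to \delta\, p)$, which equals $\upmu\ M\, c\ \delta$ by $\eta$-conversion of functions (which the theory assumes), not by a further ``outer'' $\mu$-$\eta$-r step.
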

\begin{proof}
  The rewrite system is strongly confluent using the rules 
  \eqref{rewrite:etaTyp} and \eqref{rewrite:muTyp} and the associated reduction rules
  for $\mupos$.
  We show the case for \ref{rewrite:muetar} and \ref{rewrite:muetal}.
  We omit $M$ which is fixed here.

  \begin{align*}
  \upmu\ (\upeta\ i)&\ (\lambda\ p \to \upeta\ (\Typ (\upeta\ i)\ p)) \leadsto_{\text{\ref{rewrite:muetar}}} \upeta\ i\\
  \upmu\ (\upeta\ i)&\ (\lambda\ p \to \upeta\ (\Typ (\upeta\ i)\ p)) \leadsto_{\text{\ref{rewrite:muetal}}} \\
  &(\lambda\ p \to \upeta\ (\Typ (\upeta\ i)\ p))\ (\etapos i) \leadsto_\beta \\ 
  &\upeta\ (\Typ (\upeta\ i)\ (\etapos i)) \leadsto_{\text{\ref{rewrite:etaTyp}}} \upeta\ i
  \end{align*}

  The resolution of the \ref{rewrite:muetar}/\ref{rewrite:mumu} pair
  can be found in the appendix of the extended version of this article
  \cite[Lemma \ref{proof:muetarmumu-proof}]{allioux2021types}.
\end{proof}

\vspace{1ex}

\begin{prop}[Termination of rewriting]
  All of the above rules form a terminating rewrite system.
\end{prop}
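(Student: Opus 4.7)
The plan is to exhibit a well-founded measure on terms that strictly decreases under every one of the rewrite rules. I would assign to each syntactic form a natural-number (or polynomial) weight: atomic terms receive weight one; the auxiliary forms $\Typ$, $\mupos$, $\muposfst$, $\mupossnd$, $\etapos$ and $\etaposelim$ have weight given by one plus the sum of their immediate arguments' weights; and the constructor $\upmu$ is given a polynomial weight, in the spirit of classical termination proofs for associative rewriting, in which the weight of the first argument appears multiplicatively so that a left-nested $\upmu$ is strictly heavier than the term obtained by reassociating it to the right.

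With the measure in place, the verification proceeds rule by rule. The projection/computation rules \ref{rewrite:muposfst}, \ref{rewrite:mupossnd}, \ref{rewrite:muposeta} and the $\etaposelim$ rule each eliminate at least one constructor or projection and are therefore handled by a routine structural argument. The typing rules \ref{rewrite:etaTyp} and \ref{rewrite:muTyp} both replace a $\Typ$ applied to a large head ($\upeta$ or $\upmu$) with a term in which the constructor argument of the outer $\Typ$ is strictly smaller, and the unit rules \ref{rewrite:muetar} and \ref{rewrite:muetal} erase a $\upmu$ altogether, so these are also immediate.

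The only non-routine case is the associativity rule \ref{rewrite:mumu}, where the right-hand side still contains two occurrences of $\upmu$. For this rule one adapts the standard polynomial interpretation $w(\upmu(c,\delta)) \approx w(c)\cdot(1 + \max_{p} w(\delta(p))) + \max_{p} w(\delta(p))$, in which $w(c)$ appears multiplicatively. A short calculation then shows that $w(\upmu(\upmu(c,\delta),\epsilon))$ strictly dominates the weight of $\upmu(c,\lambda p.\,\upmu(\delta\,p,\ldots))$, exactly mirroring the well-known termination proof for the monoidal associativity rewrite $f(f(a,b),c) \to f(a,f(b,c))$.

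The main obstacle I anticipate is that positions in $\Pos M\, c$ are opaque, so the measure cannot literally range over them as if they were an externally given finite set. I would handle this by defining the weight on the syntactic presentation of elements of $\MM$ itself (the base monads and monad constructors introduced later in the paper) rather than on their position-level decoding, so that the $\max$'s and sums over $\Pos$ become finite and concrete at the syntactic level. Combined with the strong confluence of the previous lemma, termination then yields a convergent rewrite system and hence decidable definitional equality for the extended theory.
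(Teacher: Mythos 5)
Your overall decomposition (easy cases for the projection/typing rules, with \eqref{rewrite:mumu} as the one genuinely hard case) matches the paper's, but the core technique does not, and I don't think your measure can be made to work as stated. The paper's proof explicitly observes that a subterm-based (first-order) ordering is insufficient for \eqref{rewrite:mumu} precisely because the right-hand side goes \emph{under a binder} and applies the higher-order arguments $\delta$ and $\epsilon$ to freshly built subterms such as $\mupos M\, p\ q$; it therefore invokes the dependency-pairs path ordering for higher-order rewriting of Blanqui, treating the rule as a variant of the ordinal-eliminator example and relying on the side condition that the constructor types of the monads in $\MM$ are inductively generated. Your polynomial interpretation with a term $\max_{p} w(\delta(p))$ quietly assumes that the weight of ``$\delta$ applied to every position'' is a well-defined, finite, monotone function of the syntax of $\delta$. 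It is not: $\delta$ is an arbitrary open term of function type (possibly a free variable, possibly a $\lambda$-abstraction whose body's size depends on its argument), the index set $\Pos M\, c$ depends on the term $c$ being rewritten, and on the right-hand side of \eqref{rewrite:mumu} the functions are applied to \emph{larger} arguments ($\mupos M\, p\ q$) inside a new $\beta$-redex. This is exactly the configuration where first-order polynomial interpretations of associativity break down and where one needs either hereditarily monotone functional interpretations or a higher-order path/dependency-pair ordering.

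Your proposed repair --- defining the weight on the syntactic presentation of elements of $\MM$ rather than on their position-level decoding --- does not close this gap, because termination must hold for the rewrite relation on \emph{all} well-typed open terms encountered during conversion checking, not only on closed canonical codes; the metavariables $c$, $\delta$, $\epsilon$ in the rules range over arbitrary terms, including neutral ones. The kernel of your idea does survive in the paper's argument in a different guise: the requirement that positions be ``concrete'' reappears as the hypothesis that the monads' constructor types are inductively generated, which is what licenses the accessibility/computability argument underlying the dependency-pairs path ordering. So the right fix is not a cleverer weight but a switch of method: set up the rules as a higher-order rewrite system and discharge \eqref{rewrite:muTyp}, \eqref{rewrite:muetar}, \eqref{rewrite:muetal} and \eqref{rewrite:mumu} with a higher-order ordering that can compare $\delta\ p$ and $\epsilon\ (\mupos M\, p\ q)$ against the left-hand side via the accessibility of the inductively generated constructor arguments.
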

\begin{proof}
  The $\mupos$, $\etapos$ and \eqref{rewrite:etaTyp} rewrite rules are obviously terminating.
  For \eqref{rewrite:muTyp},  \eqref{rewrite:muetar},
  \eqref{rewrite:muetal} and \eqref{rewrite:mumu}, we need to 
  use a dependency-pairs path ordering as introduced by \cite{blanqui19}
  to verify termination. In particular for associativity, a 
  lexicographic lifting of the subterm relation is not 
  enough to verify \eqref{rewrite:mumu}'s termination as we are 
  going under binders and applying the $\delta$ and $\epsilon$ 
  functions to subterms. This is a variant of the ordinal type 
  eliminator proven to terminate in \cite[Example 14, p11]{blanqui19},
  which requires to ensure that the constructor types of our monads 
  are inductively generated. All the monads considered in this article satisfy this.
\end{proof}

The instances of the $\mu$ and $\eta$ operations for specific monads
will themselves be defined by structural recursion on inductive
datatypes and can be shown to respect the associativity and unitality
laws prositionally. Results such as can be found in~\cite[Lemma
6.8]{cockx:hal-02901011}, therefore, guarantee the consistency of the
system.  Furthermore, we conjecture that the rewrite system is
strongly normalizing in conjunction with the definitional equality of
Agda.

\subsection{Populating the Universe}

In the previous section, we described the generic structure
associated to every monad $M : \MM$.  We now proceed to implement this
structure in concrete cases, describing in each case the most salient
features and omitting unnecessary details where we feel it will improve
the presentation.  Complete definitions can be found in the Agda
formalization.

In the material which follows, we allow ourselves the freedom to use
standard techniques such as inductive definitions and pattern matching
during the definition of each monad.  In practice, this agrees with
the implementation: there, we first define all the necessary structure
using ordinary Agda definitions and subsequently install rewrites
which connect the decoding functions to their desired implementations.
So for example, in order to define the indices of the identity monad
(see below), we first make an ordinary Agda definition
\begin{align*}
  \mathsf{IdIdx} &: \UU \\
  \mathsf{IdIdx} &= \Unit
\end{align*}
and then postulate the rewrite
\[ \Idx \Id \leadsto \mathsf{IdIdx} \]
In the presentation which follows, we omit this auxiliary step
and just write ``$=$'' when defining the structure associated
to each monad.

\subsubsection{The identity monad}

We begin by adding a constant $\Id : \MM$ to the universe to represent
the \emph{identity monad} (so named since its extension induces the
identity monad on $\UU$ up to equivalence).  The polynomial part of
$\Id$ decodes as follows:
\begin{align*}
  &\Idx \Id\, &= \Unit\\
  &\Cns \Id\, \ttt &= \Unit\\
  &\Pos \Id\, \ttt &= \Unit\\
  &\Typ \Id\, \ttt\ \ttt &= \ttt
\end{align*}
Given the triviality of the associated polynomial, it is perhaps not
surprising that its unit and multiplication are equally trivial.  Indeed,
they are given by:
\begin{align*}
  &\upeta\, \Id\, i = \ttt \\
  &\upmu\, \Id\, \_\, \delta = \delta\, \ttt
\end{align*}
We omit the remaining structure, which has a similar flavor.

\subsubsection{The pullback monad}

Our first monad constructor starts from a monad $M : \MM$ and a family
$X : \Idx M \to \UU$ and refines the indices of $M$ by additionally
decorating the inputs and output of each constructor by
elements of $X$.  We refer to the resulting monad as the
\emph{pullback of $M$ along $X$} (cf. \cite[Section 2.4]{BD98}).  We
implement this construction by first postulating a function
\[ \Pb : (M : \MM)\, (X : \Idx M \to \UU) \to \MM \] which adds the
necessary code to our universe.  We next define the polynomial part of
$\Pb M\, X$ as follows:
\begin{align*}
  &\Idx (\Pb M\, X) &&= \dsum{i : \Idx M} X\ i\\
  &\Cns (\Pb M\, X)\ (i , x) &&= \\
  &&& \hspace{-2cm}\dsum{c : \Cns M\, i} \dprod{p : \Pos M\, c} X\, (\Typ M\, c\, p) \\
  &\Pos (\Pb M\, X)\ (c , \nu) &&= \Pos M\, c\\
  &\Typ (\Pb M\, X)\ (c , \nu)\ p &&= (\Typ M\, c\ p\, ,\, \nu\ p)
\end{align*}

The unit for the pullback monad simply calls the unit of the
underlying monad and decorates its input with the same value as its
output:
\[ \upeta\ (\Pb M\, X)\ (i \,, x) = (\upeta\ M\, i\, ,\, \etadec M\, X\, x) \]

As for the multiplication of the pullback monad, it again simply calls
the multiplication of the underlying monad, this time decorating the
result using the decorations of the second-level constructors,
and forgetting the intermediate decoration.  That is, we have
\[ \upmu\ (\Pb M\, X)\ (c \,, \nu)\ \delta = (\upmu\ M\, c\ \delta', \nu') \]
where
\begin{align*}
  \delta'\ p &= \fst\, (\delta\, p)\\
  \nu'\ p &= \snd\, (\delta\, (\muposfst p))\, (\delta\, (\mupossnd p))
\end{align*}
The remaining structure is easily worked out from these definitions,
and we omit the details.

\subsubsection{The Slice Monad}
\label{sec:slice-monad}

The Baez-Dolan slice construction is at the heart of the opetopic
approach: it is this construction which allows us to ``raise the
dimension'' of the coherences in our algebraic structures.  In our
setting, it will take the form of a monad constructor
$\Slice : \MM \to \MM$.  The basic intuition is that, for a monad
$M : \MM$, the monad $\Slice M$ may be described as the \emph{monad of
  relations in $M$}.  In order to realize this intuition, we have to
find a way to encode the relations of $M$ as some kind of data, just
as the identity type encodes the relations in an ordinary type as
data.  This data will then serve as the constructors for the slice
monad.

To begin, for a monad $M : \MM$, let us define
\[ \Idx (\Slice M) = \dsum{i : \Idx M} \Cns M\, i \] That is, the
indices of the monad $\Slice M$ are exactly the constructors of the
monad $M$.  Next, we are going to capture the notion of \emph{relation
  in $M$} with the help of a certain inductive family, defined as
follows:
\begin{align*}
  &\texttt{data}\hspace{1ex} \mathsf{Tree} : \Idx (\Slice M) \to \UU \hspace{1ex} \texttt{where} \\
  &\hspace{.2cm}\lf : (i : \Idx M) \to \mathsf{Tree}\ (i\, , \upeta\ M\, i) \\
  &\hspace{.2cm}\nd : \{i : \Idx M\}\ (c : \Cns M\, i)\\
  &\hspace{.5cm}\to (\delta : (p : \Pos M\, c) \to \Cns M\, (\Typ M\, c\ p))\\ 
  &\hspace{.5cm}\to (\epsilon : (p : \Pos M\, c) \to \mathsf{Tree}\ (\Typ M\, c\ p, \delta\ p))\\ 
  &\hspace{.5cm}\to \mathsf{Tree}\ (i\, , \upmu\ M\, c\ \delta)
\end{align*}
And we define $\Cns (\Slice M) = \mathsf{Tree}$.

The reader familiar with the theory of inductive types may recognize
this as a modified form of the \emph{indexed $W$-type} associated to a
polynomial or indexed container.  Here, as in that case, the elements
of this type are \emph{trees} generated by the constructors of the
polynomial in question (the underlying polynomial of $M$, in the case
at hand).  The difference in the present setup is that our polynomial
is equipped with a multiplication and unit, and we reflect this fact
by indexing our trees not just over the indices (as is typically the
case) but also over the constructors, applying the multiplication and
unit as appropriate.  The result is that we may view an element
$\sigma : \Cns (\Slice M)\, (i \,, c)$ as ``a tree generated by the
constructors of $M$ whose image under iterated multiplication is
$c$''.  It is in this sense that this definition captures the
\emph{relations} in the original monad $M$.

We now turn to the rest of the structure required to complete the
definition of $\Slice M$.  Intuitively speaking, the positions of
a tree $\sigma$ will be its \emph{internal nodes}.  This can be
accomplished by defining the positions by recursion on the
constructors as follows:
\begin{align*}
  &\Pos (\Slice M)\, (\lf\, i) &&= \Empty \\
  &\Pos (\Slice M)\, (\nd\, c\, \delta\, \epsilon)
                               &&= \\
  &&&\hspace{-1cm} \Unit \sqcup \sum_{(p : \Pos M\, c)} \Pos (\Slice M)\, (\epsilon\, p)
\end{align*}
In other words, if our tree is a leaf, it has no positions, and if it
is a node, its positions consist of either the unit type (to record
the current node) or else the choice of a position of the base
constructor and, recursively, a node of the tree attached to that
position.

Finally, the typing function $\Typ (\Slice M)\, \sigma\, p$ just
projects out the constructor of $M$ occurring at the node of $\sigma$
specified by position $p$:
\begin{align*}
  &\Typ (\Slice M)\, (\lf\, i)\, () \\
  &\Typ (\Slice M)\, (\nd\, \{i\}\, c\, \delta\, \epsilon)\, (\inl \ttt) &&= (i \,, c) \\
  &\Typ (\Slice M)\, (\nd\, \{i\}\, c\, \delta\, \epsilon)\, (\inl (p \,, q)) &&= \\
  &&&\hspace{-2cm} \Typ (\Slice M)\, (\epsilon\, p)\, q
\end{align*}

It remains to describe the unit and multiplication of the slice monad.
In accordance with the general laws for monads, the unit constructor
needs to have a unique position, and since the positions of a given
tree are given by occurrences of constructors, this implies that
the unit at a given constructor $c$ should be the \emph{corolla},
that is, a tree with one node consisting of $c$ itself.  Therefore
we set:
\begin{align*}
  &\upeta\, (\Slice M)\, (i \,, c) &&= \\
  &&& \hspace{-2cm} \nd c\, (\lambda\, p \to\, \upeta\, M\, (\Typ M\, c\, p)) \\
  &&& \hspace{-1.3cm} (\lambda\, p \to \lf\, (\Typ M\, c\, p))
\end{align*}
Note that this definition would not be type correct without the
assumption that $M$ is definitionally right unital.  A similar remark
applies to the rest of the definitions of the slice monad in this
section.  Indeed, it is exactly the problem of completing the
definition of the slice monad without any assumptions of truncation
which motives the introduction of our monadic universe in the first
place.

Let us now sketch the definition of the multiplication in the slice
monad.  As hypotheses, we are given a tree
$\sigma : \Cns (\Slice M) (i \,, c)$ for some $i : \Idx M$ and
$c : \Cns M\, i$, as well as a decoration
$$\phi : (p : \Pos (\Slice M)\, \sigma) \to \Cns (\Slice M)\, (\Typ
(\Slice M)\, \sigma\, p)$$  In view of the preceding discussion, this
means that $\phi$ assigns to every position of $\sigma$ a tree which
multiplies to the constructor which inhabits that position.  The
multiplication of $\Slice M$ may intuitively be described as
``substituting'' each of these trees into the node it decorates.

The definition of $\upmu (\Slice M)$ will require an auxillary
function $\upgamma$ with the following type:
\begin{align*}
  \upgamma &: (M : \MM)\, \{i : \Idx M\}\, (c : \Cns M\, i) \\
           &\to (\sigma : \Cns (\Slice M)\, (i \,, c)) \\
           &\to (\phi : (p : \Pos M\, c) \to \Cns M\, (\Typ M\, c\, p)) \\
           &\to (\psi : (p : \Pos M\, c) \to \Cns (\Slice M)\, (\Typ M\, c\, p \,, \phi\, p)) \\
           &\to \Cns (\Slice M) (i \,, \upmu\, M\, c\, \phi)
\end{align*}
The intuition of this function is that $\upgamma$ \emph{grafts} the
tree specified by $\psi$ onto the appropriate leaf of the tree
$\sigma$ (indeed, $\gamma$ may be seen as an incarnation of
multiplication in the \emph{free} monad generated by the underlying
polynomial of $M$).  This function simply operates by induction and
may be defined as follows:
\begin{align*}
  &\upgamma\, M\, (\lf\, i)\, \delta\, \epsilon\, &&= \epsilon\, (\etapos M\, i) \\
  &\upgamma\, M\, (\nd\, c\, \delta\, \epsilon)\, \phi\, \psi\, &&= \nd\, c\, \delta'\, \epsilon'
\end{align*}
where we define
\begin{align*}
  \phi'\, p\, q &= \phi (\mupos M\, c\, \delta\, p\, q) \\
  \psi'\, p\, q &= \psi (\mupos M\, c\, \delta\, p\, q) \\
  \delta'\, p &= \upmu\, M\, (\delta\, p)\, (\phi' p) \\
  \epsilon'\, p &= \upgamma\, M\, (\epsilon\, p)\, (\psi' p)
\end{align*}
With this function in hand, we may complete the definition
of the multiplication in the slice monad as
\begin{align*}
  &\upmu\, (\Slice M)\, (\lf\, i)\, \phi &&= \lf\, i \\
  &\upmu\, (\Slice M)\, (\nd\, c\, \delta\, \epsilon)\, \phi &&= \upgamma\, M\, w\, \delta\, \psi
\end{align*}
where we put
\begin{align*}
  w &= \phi\, (\inl \ttt) \\
  \phi'\, p \, q &= \phi\, (\inr (p \,, q)) \\
  \psi\, p &= \upmu\, (\Slice M)\, (\epsilon\, p)\, (\phi'\, p)
\end{align*}
This definition then says that substitution is trivial on leaves, and
when we are looking at a node, we first retrieve the tree living at
this position (called $w$ above), and then graft to it the result of
recursively substituting in the remaining branches.

We refer the reader to the formalization for details on the remaining
constructions handling positions in the slice monad.\\

\subsection{Dependent monads}

Since the notion of dependent type is one of the primitive aspects of
Martin-L\"{o}f type theory, it is perhaps not surprising that we
quickly find ourselves in need of a dependent version of our
polynomial monads.  We note there is a potential point of confusion
here: while a dependent type can be thought of as a family of types
dependent on a base type, a dependent monad in our sense is \emph{not}
a family of monads.  Rather, it is a monad structure on dependent
families of indices and constructors indexed over the indices and
constructors of the base monad $M$.  Put another way, under the
equivalence between dependent types with \emph{domain} $A$ and
functions with \emph{codomain} $A$, our dependent monads over a base
monad $M$ correspond to monads $M'$ equipped with a \emph{cartesian
  homomorphism} to $M$. \footnote{In fact, it is entirely possible to
  add a monadic form of dependent sum to the list of monad
  constructors of the universe $\MM$ so that this statement becomes
  literally true.  As we will not have need of this construction in
  the present article, however, we omit the details.}  The advantage
of working in a dependent style, however, is that we do not need to
axiomatize the notion of homomorphism using propositional equalities
as it is encoded directly in the typing of the multiplication
operator.

To begin, let us postulate, for each monad $M : \MM$, a universe
$\MMd\, M : \UU$ of \emph{monads over $M$}.  
\[
  \MM\da : \MM \rightarrow \UU
\]
That is, for $M : \MM$, the inhabitants of $\MMd M$ are codes for
monads equipped with a cartesian morphism to $M$.  For this reason,
when we are given a monad $M$ and a dependent monad $M\da : \MMd\, M$,
we often speak of the pair $(M \,, M\da)$ as a \emph{monad extension}.

The decoding functions for dependent monads follow the same setup as
the non-dependent case, simply repeating each of the definitions
fiberwise.  And since the dependent case resembles so closely the
non-dependent one, we have attempted to systematically name dependent
versions of the the monadic structure introduced above by
appending a ``$\da$'' to the previously given name.  For example,
$\Idxd$ for the dependent version of the family $\Idx$ of indices.

As a first step, a dependent monad will be equipped with a polynomial
lying over the base polynomial.  This corresponds to the following
three dependent families:
\begin{align*}
  \Idxd &: \{M : \MM\} \to \MMd\ M \to \Idx\ M \to \UU\\
  \Cnsd &: \{M : \MM\}\ (M\da : \MMd\ M)\ \{i : \Idx\ M\}\\ 
        &\to \Idxd\ M\da\ i \to \Cns\ M\ i \to \UU\\
  \Typd &: \{M : \MM\}\ (M\da : \MMd\ M)\\
        &\to \{i : \Idx\ M\}\ \{i\da : \Idxd\ M\da\ i\}\\
        &\to \{c : \Cns\ M\ i\}\ (c\da : \Cnsd\ M\da\ i\da\ c)\\
        &\to \Pos\ M\ c \to \Idxd\ M\da\ (\Typ\ M\ c\ p)
\end{align*}
The reader will notice, however, that there is no analog of dependent
positions.  This is because we are modelling \emph{cartesian}
morphisms of monads, and therefore positions of a dependent
constructor $c\da : \Cnsd\ M\da\ i\da\ c$ should be the same as those
of the underlying constructor $c$.  By working fiberwise, we can
reflect this requirement directly in the type signature.

The monadic structure of a dependent monad simply operates fiberwise,
following the multiplication in the base monad:
\begin{align*}
  \upetad &: \{M : \MM\}\ (M\da : \MMd\ M)\\ 
            &\to \{i : \Idx\ M\} \to \Idxd\ M\da\ i\\ 
            &\to \Cnsd\ M\da\ i\, (\upeta\, M\, i)\\
  \upmud &: \{M : \MM\}\ (M\da : \MMd\ M)\\ 
           &\to \{i : \Idx\ M\}\ \{c : \Cns\ M\ i\} \\
           &\to \{\delta : (p : \Pos\ M\ c) \to \Cns\ M\ (\Typ\ M\ c\ p)\}\\
           &\to (i\da : \Idxd\ M\da\ i)\ (c\da : \Cnsd\ M\da\ i\da\ c)\\
          &\to (\delta\da : (p : \Pos\ M\ c) \to \\
          &\hspace{1cm} \Cnsd\ M\da\ (\Typd\ M\da\ c\da\ p)\ (\delta\ p))\\
           &\to \Cnsd\ M\da\ i\da\ (\upmu\ M\ c\ \delta)
\end{align*}
The fact that we require the multiplication of a family of dependent
constructors to live over the multiplication of the base constructors
(and similarly for the unit) is what guarantees the homomorphism
property alluded to above.

Our dependent monads must also be equipped with equational laws making
them compatible with the corresponding laws of the monads they live
over.  For example, the typing functions for $\upetad$ and $\upmud$
respect the indices of parameters, just as in the base case:
\begin{align*}
  \Typd\ M\da\ &(\upetad\ M\da\ i\da)\ p \leadsto i\da\\
  \Typd\ M\da\ &(\upmud\ M\da\ c\da\ \delta\da)\ p \leadsto \\
               &\Typd\ M\ (\delta\da\ (\muposfst\da M\da\ p))\ \\
               &\hspace{1.5cm} (\mupossnd\da M\da\ p)
\end{align*}
There are similar laws asserting the definitional associativity and
unitality of the multiplication, but as these all follow exactly
the same pattern, we omit the details here and refer the curious
reader to the implementation.

We remark that, because their positions are the same, decorations of
the dependent constructor $\upeta\da$ are essentially constant just as
in the case of $\upeta$, and there is therefore an analogous function
$\etadecd$ generating such decorations from a single piece of data
with a similar definition.  This function occurs occasionally in the
code below.

\subsection{Populating the dependent universe}

We now quickly describe dependent counterparts of the base monads and
monad constructors of the previous section.  As most of the
definitions are routine and easily deduced from the absolute case, the
presentation here is brief.

\subsubsection{The identity monad}

The dependent identity monad is parametrized by a type $A : \UU$ and
indexed over the identity monad $\Id$.  That is, we have a dependent
monad constructor of the form
\[
  \Idd : \UU \to \MMd\ \Id 
\]
Its polynomial part is defined by
\begin{align*}
  &\Idxd (\Idd A)\, \ttt &= A\\
  &\Cnsd (\Idd A)\, x\, \ttt &= \Unit\\
  &\Posd (\Idd A)\, \ttt\, \ttt &= \Unit\\
  &\Typd (\Idd A)\, \{i\da = x\}\ \ttt\ \ttt &= x
\end{align*}
As in the base case, the multiplication and unit all take values in
the unit type, making the structure essentially trivial.

\subsubsection{The dependent pullback monad}

Just as we can refine the indices of a base monad, so the dependent
pullback monad allows us to refine the indices of a dependent monad.
Its constructor takes the form
\begin{align*}
  \Pbd &: \{M : \MM\}\ (M\da : \MMd\ M)\ \{X : \Idx\ M \to \UU\}\\
       &\to (X\da : \{i : \Idx\ M\} \to \Idxd\ M\da\ i \to X\ i \to \UU)\\
       &\to \MMd\ (\Pb\ M\ X)
\end{align*}
Note that the family $X\da$ may also depend on elements of the
refining family $X$ for the base monad.  The underlying polynomial of
the dependent pullback is then defined as follows:
\begingroup
\addtolength{\jot}{1em}
\begin{align*}
  &\Idxd\ (\Pbd M\da\ X\da)\ (i, x) = \sum_{(i\da : \Idxd\, M\da\, i)} X\da\ i\da\ x\\
  &\Cnsd\ (\Pbd M\da\ X\da)\ (i\da , x\da)\ (c , \nu) =\\
  &\hspace{.5cm} \sum_{(c\da : \Cnsd\, M\da\, i\da\, c)} \prod_{(p : \Pos M\, c)} X\da\, (\Typd M\da\, c\da\, p)\, (\nu\, p) \\
  &\Typd\ (\Pbd M\da\, X\da)\ (c\da , \nu\da)\ p = \Typd\ M\, c\da\, p, \nu\da\, p
\end{align*}
\endgroup with multiplicative structure following fiberwise the rules
for the base pullback $\Pb M\, X$.

\subsubsection{The dependent slice monad}

Finally, the dependent slice monad extends the Baez-Dolan
slice construction to the dependent case.  Its monad constructor
is typed as follows:
\[ \Sliced : \{M : \MM\}\, (M\da : \MMd\, M) \to \MMd\ (\Slice M) \]
As for the absolute case, the indices are given by the dependent
constructors.  That is, we set
\[ \Idxd (\Sliced M\da)\ (i , c) = \sum_{i\da : \Idxd M\da\, i} \Cnsd
  M\da\, i\da\, c \] Similarly, the type of constructors
$\Cnsd (\Sliced M\da)$ are trees lying over a tree in the base.  This
corresponds to the following (rather verbose) inductive type:
\begin{align*}
  &\texttt{data}\hspace{1ex} \mathsf{Tree\da} : \{i : \Idx (\Slice M)\} \to (i\da : \Idxd (\Sliced M\da) \\
  &\hspace{1cm} \to \mathsf{Tree}\, i \to \UU \hspace{1ex} \texttt{where} \\
  &\lfd : \{i : \Idx M\}\ (i\da : \Idxd M\da\, i)\\
  &\hspace{.2cm}\to \Cnsd (\Sliced M\da)\, (i\da, \upetad M\da\, i\da)\, (\lf i) \\
  &\ndd : \{i : \Idx M\}\ \{c : \Cns M\, i\}\\ 
  &\hspace{.2cm}\to \{\delta : (p : \Pos M\, c) \to \Cns M\, (\Typ M\, c\, p)\}\\
  &\hspace{.2cm}\to \{\epsilon : (p : \Pos M\, c) \\
  &\hspace{1.2cm}\to \Cns (\Slice M)\, (\Typ M\, c\, p, \delta\, p)\}\\ 
  &\hspace{.2cm}\to \{i\da : \Idxd M\da\, i\} \to (c\da : \Cnsd M\da\, i\da\, c)\\
  &\hspace{.2cm}\to (\delta\da : (p : \Pos M\, c) \to \Cnsd M\da\, (\Typd M\da\, c\da\, p))\\ 
  &\hspace{.2cm}\to (\epsilon\da : (p : \Pos M\, c) \\
  &\hspace{1.2cm}\to \Cnsd (\Sliced M\da)\, (\Typd M\da\, c\da\, p, \delta\da\, p))\\ 
  &\hspace{.2cm}\to \Cnsd (\Sliced M\da)\, (i\da \,, \upmud M\da\, c\da\, \delta\da)\, (\nd c\, \delta\, \epsilon)
\end{align*}
The rest of the description of the dependent slice follows exactly the
same pattern: duplicating the definitions and laws of the base case
routinely in each fiber.


\section{Opetopic Types}
\label{sec:opetopic-types}

In this section, we show how to use the universes introduced above in
order to implement Baez and Dolan's definition of \emph{opetopic type}
\cite{BD98}.  We go on to explain how to use this definition to capture
the notion of \emph{weak $M$-algebra}, and finish with some examples.

\begin{defn}
  An \textbf{opetopic type over a monad $M$} is defined coninductively
  as follow:
  \begin{align*}
    &\texttt{record}\hspace{1ex} \OpType (M : \MM) : \UU_1 \hspace{1ex} \texttt{where} \\
    &\hspace{.5cm}\Car : \Idx M \to \UU \\
    &\hspace{.5cm}\Rel : \OpType (\Slice (\Pb M\, \Car))
  \end{align*}
\end{defn}
We see from the definition that an opetopic type consists of
an infinite sequence of dependent families
\[ \Car X \,, \Car (\Rel X) \,, \Car (\Rel (\Rel X)) \,, \dots \]
whose domain is the set of indices of a monad whose definition
incorporates all the previous families in the sequence.  Given an
opetopic type $X : \OpType M$, we will often denote this sequence of
dependent types more succinctly as just $X_0, X_1, X_2, \dots$ since
the destructor notation quickly becomes quite heavy.  We will use a
similar convention for the series of monads $M = M_0, M_1, M_2 \dots$
generated by the definition.  That is, we have:
\begin{equation}
  \label{eq:unfold}
  \begin{aligned}
    &M_0 = M && X_0 = \Car X : \Idx M \to \UU \\
    &M_1 = \Slice (\Pb M_0\, X_0) && X_1 = \Car (\Rel X) : \Idx M_1 \to \UU \\
    &M_2 = \Slice (\Pb M_1\, X_1) && X_2 = \Car (\Rel (\Rel X)) : \\
    &&&\hspace{2cm} \Idx M_2 \to \UU \\
    &\hspace{1cm} \vdots && \hspace{1cm} \vdots
  \end{aligned}
\end{equation}

Before describing the connection between opetopic types and weak
$M$-algebras, let us give some examples of how to think of the
resulting dependent families as ``fillers'' for a collection of
``shapes'' generated by the monad $M$.  For concreteness, we will fix
$M = \Id$ in our examples.  Given $X : \OpType \Id$, we can define the
type of \emph{objects} of $X$ as simply
\begin{equation}
  \label{eq:obj-defn}
  \begin{aligned}
    &\mathsf{Obj} : \UU \\
    &\mathsf{Obj} = \Car X\, \ttt
  \end{aligned}
\end{equation}
Next, after a single slice, $X$ provides us with a type of
\emph{arrows} between any two objects which can be defined as follows:
\begin{equation}
  \label{eq:arrow-defn}
  \begin{aligned}
    &\mathsf{Arrow} : (x\, y : \mathsf{Obj}) \to \UU \\
    &\mathsf{Arrow}\, x\, y = \Car (\Rel X)\, \\
    &\hspace{2cm} ((\ttt \,, y) \,, (\ttt , \etadec \Id\, (\Car X)\, x))
  \end{aligned}
\end{equation}
Furthermore, for a \emph{loop} $f$ in $X$, that is, an arrow with the
same domain and codomain, $X$ includes a family whose elements can be
thought of as ``null-homotopies of $f$'', and which is defined by
\begin{align*}
  &\mathsf{Null} : (x : \mathsf{Obj})\, (f : \mathsf{Arrow}\, x\, x) \to \UU \\
  &\mathsf{Null}\, x\, f = \Car (\Rel (\Rel X)) \\
  & \hspace{2cm} ((((\ttt \,, x) \,, (\ttt , \etadec \Id\, (\Car X)\, x)) \,, f) \,, \\
  & \hspace{2.5cm} \lf\, (\ttt \,, x) \,, \bot\mhyphen\mathsf{elim})
\end{align*}
More examples of shapes and filling families may be found in the
development.

\subsection{Weak Algebras and Fibrant Opetopic Types}
\label{sec:weak-alg}

We now wish to describe how an opetopic type $X : \OpType M$ encodes
the structure of a weak $M$-algebra.  Before we begin, it will be
convenient to adopt the following convention: recall that $X$ consists
of an infinite sequence of dependent types following the form of
Equation \ref{eq:unfold}.  In the discussion which follows, instead of
working with a fixed opetopic type $X$, we will rather just work with
abstract type families $X_0 , X_1 , \dots$ over monads
$M = M_0, M_1, \dots$ following the same pattern of dependencies. We
will then freely add new families of the form $X_i$ to our hypotheses
as they become necessary.  The advantage of working this way is that
our definitions are parameterized over just that portion of the
opetopic type which is necessary, as opposed to depending on the
entire opetopic type $X$ itself, and consequently, we will be able to
reuse our definitions and constructions starting at any point in the
infinite sequence generated by $X$.

We recall that for $M$ a polynomial monad, an \emph{$M$-algebra}
consists of a \emph{carrier family} $C : \Idx M \to \UU$ together with
a map
\[ \alpha : \{i : \Idx M\} \to [ M ]\, C\, i \to C\, i \] which
satisfies some equations expressing the compatibility of $\alpha$ with
the multiplication of $M$.  Indeed, it is the need for a complete
description of these equations in all dimensions which motivates the
present work.  Now, clearly the first dependent type
$X_0 : \Idx M \to \UU$ may serve as a carrier for an $M$-algebra
structure.  Let us now see what else this sequence of families
provides us with.

After one iteration, we obtain a type family $X_1 : \Idx M_1 \to \UU$,
and unfolding the definition of the indices of the slice and pullback
monads, we find that the domain of $X_1$ takes the form
\begin{align*}
  \sum_{(i : \Idx M)} \sum_{(x : X_0 i)}
  \sum_{(c : \Cns M\ i)} (p : \Pos M\, c) \to X_0\, (\Typ M\, c\, p)
\end{align*}
The elements of this type are 4-tuples $(i \,, x \,, c \,, \nu)$, and
we now observe that the three elements $i$, $c$ and $\nu$ are typed
such that they are exactly the arguments of the hypothetical algebra
map $\alpha$ introduced above.  We may regard the family $X_1$,
therefore, as a relation between triples $(i \,, c \,, \nu)$ and
elements $x : X_0\, i$, and in order to define a map $\alpha$, we only
need to impose that this relation is functional in the sense that
there is a \emph{unique} $x$ determined by any such triple.  When this
is the case, we will say that the family $X_1$ is
\emph{multiplicative}.  That is, we define:
\begin{align*}
  &\ismult : \{X_0 : \Idx M_0 \to \UU\}\, (X_1 : \Idx M_1 \to \UU) \to \UU \\ 
  &\ismult \{X_0\}\, X_1 = \{i : \Idx M\}\, (c : \Cns M\, i) \\
  &\hspace{1cm} \to (\nu : (p : \Pos M\, c) \to X_0\, (\Typ M\, c\, p))\\
  &\hspace{1cm} \to \iscontr (\sum_{x : X_0\, i} X_1\, (i \,, x \,, c \,, \nu))
\end{align*}
Supposing we are given a proof $m_1 : \ismult X_1$, we can define an
algebra map $\alpha$ as above by
\[\alpha\, (c \,, \nu) = \fst (\ctr (m_1\, c\, \nu)) \]
Furthermore, we will write
\[\alpha\wit\, (c \,, \nu) = \snd (\ctr (m_1\, c\, \nu)) \]
for the associated element of the relation
$X_1 (i \,, \alpha\, (c \,, \nu) \,, c \,, \nu)$ which witnesses
this multiplication.

Let us now suppose that our sequence extends one step further, that
is, that we are given a type family $ X_2 : \Idx M_1 \to \UU $ and a
proof $m_2 : \ismult X_2$.  We now show how to use this further
structure to derive some of the expected \emph{laws} for the algebra
map $\alpha$ we have just defined.  As a first example, we expect
$\alpha$ to satisfy a unit law: decorating a unit constructor with
some element $x$ and then applying $\alpha$ should return the element
$x$ itself.  In other words, we expect to be able to prove
\begin{align*}
  &\alpha^{\eta}\coh : \{i : \Idx M\} (x : X_0\, i) \\
  &\hspace{.3cm} \to \alpha\, (\upeta\, M\, i \,, \etadec M\, X_0\, x)\, \equiv x
\end{align*}
To prove this equality, let us define the following function:
\begin{align*}
  &\upeta\alg_{m_2} : \{i : \Idx M\} (x : X_0\, i) \\
  &\hspace{.3cm} \to X_1\, ((i \,, x) \,, (\upeta\, M\, i , \etadec M\, X_0\, x)) \\
  &\upeta\alg_{m_2} = \fst (\ctr (m_2\, (\lf (i \,, x))\, \botelim)) 
\end{align*}
Now we simply notice that the pairs
\[ \ctr (m_1\, (\upeta\, M\, i)\, (\etadec M\, X_0\, x)) \equiv (x \,,
  \upeta\alg\, x) \] must be equal as indicated, since they inhabit a
contractible space.  Projecting on the first factor gives exactly the
desired equation.

We also expect our algebra map $\alpha$ to satisfy an equation
expressing its compatibility with multiplication of the following
form:
\begin{align*}
  &\alpha^{\upmu}\coh : \{i' : \Idx M\}\, (c' : \Cns M\, i)\\
  &\hspace{.2cm} \to (\delta' : (p : \Pos M\, c) \to \Cns M (\Typ M\, c\, p))\\
  &\hspace{.2cm} \to (\nu' : (p : \Pos M\, c') (q : \Pos M\, (\delta'\, p)) \\
  &\hspace{3cm} \to X_0\, (\Typ M\ (\delta'\, p)\, q)) \\
  &\hspace{.2cm} \to \alpha\, (\upmu\, M\, c' \, \delta')\, (\lambda p \to \nu'\, (\muposfst p)\, (\mupossnd p)) \equiv\\
  &\hspace{.7cm} \alpha\, c'\, (\lambda p \to \alpha\, (\delta'\, p)\, (\nu'\, p))
\end{align*}
We note that this equation is simply the type theoretic translation of
the familiar commutative diagram
\[
  \begin{tikzcd}
    {[ M ]\, [ M ]\, X_0} \ar[r,"\upmu_{X_0}"] \ar[d,"{[ M ]\, \alpha}"'] &
    {[ M ]\, X_0} \ar[d,"\alpha"] \\
    {[ M ]\, X_0} \ar[r,"\alpha"'] & X_0
  \end{tikzcd}
\]
To prove this axiom, we use $m_2$ to define the following
multiplication operation on elements of the family $X_1$:
\begin{align*}
  &\upmu\alg_{m_2} : \{i : \Idx M\}\, (c : \Cns M\, i)\\
  &\hspace{.3cm}\to (\nu : (p : \Pos M c) \to X_0\, (\Typ M\, c\, p)) \\
  &\hspace{.3cm}\to (\delta : (p : \Pos M c)\\
  &\hspace{2cm} \to  \Cns (\Pb M\, X_0) (\Typ (\Pb M\, X_0)\, (c, \nu)\, p)) \\
  &\hspace{.3cm}\to (x_0 : X_0\, i)\, (x_1 : X_1\, (i \,, x_0 \,, c \,, \nu)) \\
  &\hspace{.3cm}\to (\bar{x} : (p : \Pos M\, c) \to X_1 (\Typ (\Pb M\, X_0) (c \,, \nu) \,, \delta\, p)) \\
  &\hspace{.3cm}\to X_1 (i \,, x_0 \,, \upmu\, (\Pb M\, X_0)\, (c \,, \nu)\, \delta) \\
  &\upmu\alg_{m_2} = \fst (\ctr (m_2\, \sigma\, \theta))
\end{align*}
where
\begin{align*}
  &\sigma = \nd (c \,, \nu)\, \delta\, (\lambda p \to \upeta\, M_1\, ((\Typ M\, c\, p \,, \nu\, p) \,, \delta\, p)) \\
\end{align*}
is the two-level tree consisting of a base node $(c , \nu)$, as well
as a second level of constructors specified by the decoration
$\delta$, and $\theta$ is the decoration of the nodes
of $\sigma$ by elements of $X_1$ defined by:
\begin{align*}
  &\theta\, (\inl \ttt) = x_1 \\
  &\theta\, (\inr (p \,, \inl \ttt)) = \bar{x}\, p \\
\end{align*}
Now instantiating our function $\upmu\alg_{m_2}$ with arguments
\begin{align*}
  &c = c' &&x_1 = \alpha\mhyphen\mathsf{wit}\, (c \,, \nu) \\
  &\nu\, p = \alpha\, (\delta'\, p \,, \nu'\, p) &&\delta\, p = (\delta'\, p \,, \nu' \, p) \\
  &x_0 = \alpha\, (c \,, \nu) &&\bar{x}\, p = \alpha\mhyphen\mathsf{wit}\, (\delta'\, p \,, \nu'\, p) 
\end{align*}
we find that the pairs
\begin{align*}
  &\ctr (m_1\, (\upmu\, M\, c'\, \delta')\, (\lambda p \to \nu'\, (\muposfst p)\, (\mupossnd p))) \equiv \\
  &\hspace{.3cm} (\alpha\, (c \,, \nu) \,,  \upmu\alg_{m_2}\, c\, \nu\, x_0\, x_1\, \delta\, \bar{x})
\end{align*}
again inhabit a contractible space, whereby their first components are
equal, giving the desired equation.

We may think of the functions $\upeta\alg_{m_2}$ and $\upmu\alg_{m_2}$
defined above as the nullary and binary cases of a multiplicative
operation on the \emph{relations} of our algebra structure.  The key
insight, as we have seen, is that this multiplicative structure
encodes exactly the \emph{laws} for the algebra map $\alpha$ defined
one level lower.  Similarly, if we are able to extend our sequence on
\emph{further} step to a family $X_3$ which is itself multiplicative,
then we will be able to show that the operations $\upeta\alg_{m_2}$
and $\upmu\alg_{m_2}$ \emph{themselves satisfy unit and associativity
  laws}, and this in turn encodes the ``2-associativity'' and
``2-unitality'' of the algebra map $\alpha$.  This motivates
the following definition:

\begin{defn}
  An opetopic type $X$ over a monad $M$ is said to be \textbf{fibrant}
  if we are given an element of the following coinductively defined
  property:
  \begin{align*}
    &\texttt{record}\hspace{1ex} \isfibrant \{M : \MM\}\, (X : \OpType M) : \UU\\
    &\hspace{.3cm} \texttt{where} \\
    &\hspace{.5cm} \carmult : \ismult M\, (\Car (\Rel X)) \\
    &\hspace{.5cm} \relfib : \isfibrant (\Rel X)
  \end{align*}
\end{defn}
Fibrant opetopic types, therefore, are our definition of infinitely
coherent $M$-algebras, with the multiplicativity of the relations
further in the sequence witnessing the higher dimensional laws
satisfied by the structure earlier in the sequence.

\subsection{Higher structures}
\label{sec:higher-structures}

We now use the preceding notions to define a number of coherent
algebraic structures.  A first example is that we obtain an internal
definition of the notion of $\infty$-groupoid as follows: 
\begin{defn}
  An \textbf{$\infty$-groupoid} is a fibrant opetopic type over
  the identity monad.  That is,
\[ \ooGrp = \dsum{X : \OpType \Id} \isfibrant X \] 
\end{defn}
\noindent We will attempt to justify the correctness of this
definition in the sections which follow.

Next, it happens that the monad $\Slice \Id$ is in fact the monad
whose algebras are monoids, and consequently, our setup leads
naturally to the definition of an $\mathbb{A}_\infty$-type, that is, a
type with a coherently associative binary operation.

\begin{defn}
  An \textbf{$\mathbb{A}_\infty$-type} is a fibrant opetopic type
  over the first slice of the identity monad.
  \[ \mathbb{A}_\infty\mhyphen\mathsf{type} = \dsum{X : \OpType
      (\Slice \Id)} \isfibrant X \]
\end{defn}
Furthermore, the notion of $\mathbb{A}_\infty$-group can now be
defined by imposing an invertibility axiom.  A classical theorem of
homotopy theory asserts that the type of $\mathbb{A}_\infty$-groups is
equivalent to the type of pointed, connected spaces via the loop-space
construction.  It would be interesting to see if the techniques of
this article lead to a proof of this fact in type theory.

The notion of $\infty$-category can also be defined using this setup.
Recall that an opetopic type over the identity monad $\Id$ has both a
type of objects and a type of arrows (Equations \ref{eq:obj-defn} and
\ref{eq:arrow-defn}).  In the definition of $\infty$-groupoid above,
the invertibility of the arrows in the underlying opetopic type is a
consequence of the fact that the family of arrows is assumed to be
multiplicative.  Consequently, we obtain a reasonable notion of a
\emph{pre-$\infty$-category} by simply dropping this assumption, and
only requiring fibrancy after one application of the destructor
$\Rel$:
\[ \preooCat = \dsum{X : \OpType \Id} \isfibrant (\Rel X) \] The
prefix ``pre'' here refers to the fact that this definition is missing
a completeness axiom asserting that the invertible arrows coincide
with paths in the space of objects, that is, an axiom of
\emph{univalence} in the sense of \cite{ahrens2015univalent}.  Such an
axiom is easily worked out in the present setting, but as it would
distract us slightly from the main objective of the present work, we
will not pursue the matter here.

\section{The $\infty$-groupoid associated to a type}
\label{sec:infty-group-assoc}

In this section, we use the machinery we have set up to produce an
$\infty$-groupoid associated to any type and eventually prove it is
unique.  As a first step, we will need a source of opetopic types.
Here is where the notion of dependent monad becomes important: we now
show that every dependent monad gives rise to an opetopic type.  The
reason for this phenomenon is simple: since our dependent monad
constructors mirror the monad constructors of the absolute case, any
monad extension $(M \,, M\da)$ in fact gives rise to a \emph{new}
monad extension as follows:
\begin{align*}
  M &\quad \mapsto \quad \Slice (\Pb M\, (\Idxd M\da)) \\
  M\da &\quad \mapsto \quad \Sliced (\Pbd M\da\, (\lambda\, j\, k \to j \equiv k))
\end{align*}
Notice how by pulling back along $\Idxd M\da$, the identity type gives
us a canonical family along which to apply the $\Pbd$ constructor.
Iterating this construction, then, we find that associated to every
monad extension $(M \,, M\da)$, is an infinite sequence
\[ (M \,, M\da) = (M_0 \,, M\da_0), (M_1 \,, M\da_1), (M_2 \,, M\da_2),
  \dots \] where $(M_{i+1} \,, M\da_{i+1})$ is obtained from
$(M_i \,, M\da_i)$ by the above transformation.

The above construction provides us with our desired source of opetopic
types.  Formally, we define (using copattern notation)
\begin{align*}
  &\OvrOpType M\, M\da : \OpType M  \\
  &\Car (\OvrOpType M\, M\da) = \Idxd\, M\da \\
  &\Rel (\OvrOpType M\, M\da) = \\
  &\hspace{.5cm}\OvrOpType (\Slice (\Pb (\Idxd\, M\da))) \\
  &\hspace{.8cm} (\Sliced (\Pbd\, M\da\, (\lambda\, j\, k \to j \equiv k)))
\end{align*}
Specializing to the case of the identity monad, we obtain the
following:

\begin{defn}
  For a type $A : \UU$, the \textbf{underlying opetopic type
    of $A$} is defined to be the opetopic type associated
  to the dependent identity monad determined by $A$.  That is,
  the opetopic type
  \[ \OvrOpType \Id\, (\Id\da\, A) \]
  in the notation of the previous paragraph.
\end{defn}
In order to show that every type $A$ determines an $\infty$-groupoid
in our sense, our next task is to show that this opetopic type is
in fact fibrant.

\subsection{Algebraic Extensions}
\label{sec:algebraic-extensions}

Let $M : \MM$ and $M\da : \MMd$.  We will say that the extension
$(M, M\da)$ is \emph{algebraic} if we have a proof
\begin{align*}
  &\isalgebraic : (M : \MM)\, (M\da : \MMd)\, \to \UU \\
  &\isalgebraic = \{i : \Idx M\}\, (c : \Cns M\, i) \\
  &\hspace{.3cm} \to (\nu : (p : \Pos M\, c) \to \Idxd M\da\, (\Typ M\, c\, p)) \\
  &\hspace{.3cm} \to \iscontr \left (\sum_{(i\da : \Idxd M\da)} \sum_{(c\da : \Cnsd M\da\, i\da)} \Typd M\da\, c\da \equiv \nu \right )
\end{align*}
An algebraic extension should be thought of as roughly analogous to a
generalized kind of opfibration: if we think of the constructors as
generalized arrows between their input indices and output, then the
hypothesis says we know a family of lifts over the source of our
constructor, and the conclusion is that there exists a unique
``pushforward'' consisting of a lift over the output as well as a
constructor connecting the two whose typing function agrees with the
provided input lifts.  Such a hypothesis is one way of encoding an
$M$-algebra, which motivates the name for this property.
See~\cite[Section 6.3]{leinster2004higher}.

The main use of the notion of algebraic extension is the following
lemma, whose proof is entirely straightforward:
\begin{lem}
  \label{lem:alg-to-fib}
  Suppose the pair $(M \,, M\da)$ is an algebraic extension.  Then
  the relation $\Idxd M\da_1$ is multiplicative.
\end{lem}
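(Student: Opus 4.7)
The plan is to unfold the definition of $\Idxd M\da_1$ at a generic index of $M_1$ and reduce the claimed contractibility by a short chain of standard $\Sigma$-manipulations to the hypothesis $\isalgebraic\, (M, M\da)$ itself. Concretely, an index of $M_1 = \Slice(\Pb M\, \Idxd M\da)$ has the shape $((i,x), (c,\nu))$ with $x : \Idxd M\da\, i$ and $\nu$ a decoration of $c$ in $\Idxd M\da$. Unfolding $M\da_1 = \Sliced (\Pbd M\da\, (\lambda j\, k \to j \equiv k))$ using the definitions of the dependent slice and dependent pullback, an element of $\Idxd M\da_1\, ((i,x),(c,\nu))$ is a 4-tuple $(i\da, e, c\da, \eta)$ consisting of $i\da : \Idxd M\da\, i$, $e : i\da \equiv x$, $c\da : \Cnsd M\da\, i\da\, c$, and $\eta : (p : \Pos M\, c) \to \Typd M\da\, c\da\, p \equiv \nu\, p$.

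Now I rearrange the $\Sigma$ asked to be contractible:
\[
  \sum_{x : \Idxd M\da\, i} \Idxd M\da_1\, ((i,x),(c,\nu))\; \simeq\; \sum_{x}\sum_{i\da}\sum_{e : i\da \equiv x}\sum_{c\da}\, \textstyle\prod_{p}\, \Typd M\da\, c\da\, p \equiv \nu\, p.
\]
The leading block $\sum_{x}\sum_{i\da}\, (i\da \equiv x)$ is a based-path singleton in $x$, so contracting it away (equivalently, swapping $x$ and $i\da$ and killing the $\Sigma$-of-singleton $\sum_{x}(i\da \equiv x)$) gives
\[
  \sum_{i\da : \Idxd M\da\, i}\sum_{c\da : \Cnsd M\da\, i\da\, c}\, \textstyle\prod_{p}\, \Typd M\da\, c\da\, p \equiv \nu\, p.
\]
A single application of $\mathsf{funext}$ then collapses the pointwise family of identifications into a single identification of typing functions, yielding exactly
\[
  \sum_{i\da : \Idxd M\da\, i}\sum_{c\da : \Cnsd M\da\, i\da\, c}\, \Typd M\da\, c\da \equiv \nu,
\]
which is literally the type whose contractibility is asserted by $\isalgebraic\, M\, M\da$ applied to $c$ and $\nu$. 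The hypothesis supplies the needed contraction, completing the proof.

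There is no conceptual obstacle here; the content of the lemma is simply that the notion of multiplicativity, after all definitions are unfolded and pulled through the $i\da \equiv x$ pullback family, is definitionally the same data as the notion of algebraic extension. The only mild care required is bookkeeping the $\Sigma$ reassociations and confirming that the singleton collapse really matches the base index $i$ on both sides; once those are checked, the reduction to $\isalgebraic$ is immediate.
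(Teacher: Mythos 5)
Your proof is correct and is exactly the ``entirely straightforward'' unfolding the paper has in mind: expand $\Idxd M\da_1$ through the dependent slice and pullback, contract the singleton $\sum_{x}(i\da \equiv x)$, apply function extensionality to the pointwise typing identifications, and land on the type whose contractibility is the definition of $\isalgebraic$. No gaps.
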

Consequently, just as dependent monads are a source of opetopic types,
algebraic extensions can be thought of as a source of multiplicative
relations.  Hence if we want to prove fibrancy of the opetopic type
associated to a monad extension, we will need to know which of the
extensions in the generated sequence are algebraic.  Our main theorem
is that after a single iteration of the slice construction,
\emph{every} monad extension becomes algebraic.  That is

\begin{thm}
  \label{thm:slice-alg}
  Let $(M \,, M\da)$ be a monad extension.  Then slice extension
  $(M_1 \,, M\da_1)$ is algebraic.
\end{thm}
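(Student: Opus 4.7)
The plan is to prove contractibility of the fiber by induction on the constructor $\sigma : \Cns M_1\, ((i,x),(c_0,\nu_0))$, i.e.\ on the tree shape at the sliced level. Recall that $\Cns M_1$ is the $\mathsf{Tree}$ type defined in Section~\ref{sec:slice-monad}, so $\sigma$ is built from $\lf$ and $\nd$, and its positions are its internal nodes. The decoration $\nu$ assigns to each such node a dependent index of $M\da_1$, which (after unfolding $\Sliced$, $\Pbd$, and the identity-type family) is the data of a dependent lift $i\da$ of the underlying base index together with a dependent constructor $c_0\da$ over the $M$-constructor sitting at that node, whose typing agrees with the pullback data. The goal is to show that this pointwise data assembles, uniquely, into a dependent tree $c\da : \Cnsd M\da_1\, (i\da,c_0\da)\, \sigma$ whose typing function $\Typd M\da_1\, c\da$ reproduces $\nu$.

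First I would unfold the definitions of $\Idx M_1$, $\Idxd M\da_1$, $\Cns M_1$, and $\Cnsd M\da_1$, noting in particular that by the $\eta$-rule for the identity type applied to the pulled-back family $\lambda j k \to j \equiv k$, a dependent index of $M\da_1$ over $((i,x),(c_0,\nu_0))$ is equivalent to a pair $(i\da, c_0\da)$ with $i\da \equiv x$ and $c_0\da$ over $c_0$ whose decoration-typing matches $\nu_0$. This reduction is what makes the fiber amount simply to a choice of dependent tree lying over $\sigma$ with prescribed node decorations.

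Then I would induct on $\sigma$. In the leaf case $\sigma = \lf\, (i,x)$, the index forces $c_0 = \upeta\, M\, i$ and $\nu_0 = \etadec M\, (\Idxd M\da)\, x$; the only possible dependent tree over a leaf is $\lfd\, i\da$, and the dependent right unit law \eqref{rewrite:muetar} applied in $M\da$ forces $i\da = x$ and $c_0\da = \upetad M\da\, i\da$, giving a unique element; since there are no internal nodes, compatibility with $\nu$ is vacuous. In the node case $\sigma = \nd\, (c_0,\nu_0)\, \delta\, \epsilon$, the positions of $\sigma$ split as the root node together with, for each $p : \Pos M\, c_0$, the positions of the subtree $\epsilon\, p$, so $\nu$ decomposes into a root value providing $(c_0\da, \nu_0\da)$ and a family of subdecorations $\nu_p$ for the subtrees. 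A dependent tree over $\sigma$ must be of the form $\ndd\, c_0\da\, \delta\da\, \epsilon\da$, and the inductive hypothesis applied to each $\epsilon\, p$ with decoration $\nu_p$ yields a unique $\epsilon\da\, p$; the $\delta\da$ is pinned down by $\nu_0\da$ together with the typing laws for $\Typd$ on $\ndd$. Assembling these proves the fiber is inhabited and contractible.

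The main obstacle is not conceptual but bookkeeping: one has to verify carefully that the typing function $\Typd M\da_1\, (\ndd\, c_0\da\, \delta\da\, \epsilon\da)$ really reproduces the given $\nu$ on all positions, which requires chasing the $\mupos / \muposfst / \mupossnd$ rewrites and the fact that the identity-type factor of the pulled-back dependent indices makes the lifts $i\da$ definitionally equal to their required values. Once these computational rules line up, both existence and uniqueness collapse to the evident inductive construction, and contractibility follows.
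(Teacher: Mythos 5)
Your proposal is correct and follows what is essentially the paper's route (the proof is deferred to the appendix of the extended version): first contract away the identity-type and typing-equality components of $\Idxd M\da_1$ so that the fiber becomes the type of dependent trees over $\sigma$ carrying the prescribed node decorations, then establish contractibility by induction on $\sigma$, with $\lfd$/$\upetad$ handling the leaf case and $\ndd$/$\upmud$ assembling the root datum with the inductively obtained subtree lifts in the node case. The only imprecisions are minor: in the leaf case $i\da \equiv x$ is forced by the contractible identity-type factor introduced by $\Pbd$ rather than by \eqref{rewrite:muetar} (which is instead what makes $\lfd$ well-typed at the index $(i\da \,, \upetad M\da\, i\da)$), and in the node case $\delta\da$ is extracted from the output indices supplied by the inductive hypothesis for the subtrees rather than from $\nu_0\da$ directly --- neither affects the argument.
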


A proof can be found in the extended version of this article
\cite{allioux2021types}.  The importance of the theorem is that it has
the following immediate corollaries:

\begin{cor}
  \label{cor:alg-is-fib}
  Let $(M \,, M\da)$ be an algebraic extension.  Then the opetopic
  type $\OvrOpType M\, M\da$ is fibrant.
\end{cor}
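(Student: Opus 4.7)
The plan is to construct an element of $\isfibrant(\OvrOpType M\, M\da)$ by coinduction on the record $\isfibrant$. Recall this record has two fields, $\carmult$ and $\relfib$, so at each depth of the corecursion I must produce (a) multiplicativity of the first relation, and (b) a fibrancy witness for the sliced extension sitting underneath $\Rel$.

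For the base step, I would first unfold the definition of $\OvrOpType M\, M\da$ to identify
\[ \Car(\Rel(\OvrOpType M\, M\da)) = \Idxd M\da_1, \]
using the iterated slice notation $(M_1\,, M\da_1)$ introduced just before the statement. The $\carmult$ field is then discharged by direct application of Lemma \ref{lem:alg-to-fib}: since $(M\,, M\da)$ is algebraic by hypothesis, the relation $\Idxd M\da_1$ is multiplicative, which is exactly the required datum.

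For the $\relfib$ field, the goal reduces to $\isfibrant(\OvrOpType M_1\, M\da_1)$, and I would invoke the corollary itself corecursively. To do so I only need to supply a proof that the next extension $(M_1\,, M\da_1)$ is algebraic, and this is delivered without any further hypothesis by Theorem \ref{thm:slice-alg}, which states that the slice of an arbitrary monad extension is always algebraic. The same theorem supplies algebraicity of $(M_{i+1}\,, M\da_{i+1})$ at every subsequent level of the tower, so the corecursive unfolding terminates in a productive (guarded) definition.

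The substantive ingredient is Theorem \ref{thm:slice-alg}: without it, the corollary would merely reduce fibrancy at depth $n$ to algebraicity at depth $n$, with no way to furnish the latter past the base case. Given the theorem and the lemma, everything else is definitional unfolding and standard corecursion; the main point of care in the formalization is that the corecursive call occurs strictly beneath the $\Rel$ destructor so that Agda's guardedness check is satisfied.
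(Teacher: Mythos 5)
Your proposal is correct and matches the paper's argument exactly: the $\carmult$ field is discharged by Lemma \ref{lem:alg-to-fib} using the algebraicity hypothesis, and the $\relfib$ field is handled by a guarded corecursive call whose algebraicity premise is supplied at every level by Theorem \ref{thm:slice-alg}. Your write-up simply spells out the unfolding of $\Car(\Rel(\OvrOpType M\, M\da))$ and the guardedness condition that the paper leaves implicit.
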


\begin{proof}
  The base case of the coinduction is Lemma \ref{lem:alg-to-fib} and
  the coinductive case is covered by Theorem
  \ref{thm:slice-alg}.
\end{proof}

\begin{cor}
  There is a map $\Gamma : \UU \to \ooGrp$.
\end{cor}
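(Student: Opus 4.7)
The plan is to define $\Gamma\, A$ on a given $A : \UU$ to be the pair consisting of the opetopic type $\OvrOpType \Id\, (\Id\da\, A)$ associated to the dependent identity monad determined by $A$, together with a fibrancy certificate for it. The underlying opetopic type is provided directly by the preceding definition as a specialization of the $\OvrOpType$ construction to the identity monad, so the only real content is producing the fibrancy witness.

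For fibrancy, I would invoke Corollary \ref{cor:alg-is-fib} applied to the monad extension $(\Id \,, \Id\da\, A)$. This reduces the task to showing that this particular extension is algebraic in the sense of Section \ref{sec:algebraic-extensions}. Unfolding the decoding functions of $\Id$ and $\Id\da\, A$ given earlier, at the unique index $\ttt$ a constructor of $\Id$ is just $\ttt : \Unit$ with a unique position, so a decoration $\nu : \Pos \Id\, \ttt \to \Idxd (\Id\da\, A)\, \ttt$ amounts to the single element $a := \nu\, \ttt : A$. The fiber one must show contractible is the space of triples $(i\da \,, c\da \,, e)$ where $i\da : A$, $c\da : \Unit$, and $e$ identifies $\Typd (\Id\da\, A)\, c\da$ with $\nu$. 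Since the typing function of $\Id\da\, A$ at $i\da$ is the constant function valued in $i\da$, after functional extensionality this reduces to the based path space $\sum_{i\da : A} (i\da \equiv a)$, which is contractible.

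With $(\Id \,, \Id\da\, A)$ shown algebraic, Corollary \ref{cor:alg-is-fib} yields $\isfibrant (\OvrOpType \Id\, (\Id\da\, A))$, and pairing this with the underlying opetopic type produces the required element of $\ooGrp$, defining $\Gamma$ on objects. I do not foresee a genuine obstacle: all the conceptual work has been discharged by Theorem \ref{thm:slice-alg} and Corollary \ref{cor:alg-is-fib}, and the residual check is a straightforward unfolding of the identity monad's trivial polynomial structure together with one application of funext and based path induction.
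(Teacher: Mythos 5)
Your proposal is correct and follows exactly the paper's argument: show that the extension $(\Id \,, \Id\da\, A)$ is algebraic and then invoke Corollary \ref{cor:alg-is-fib}. The only difference is that you spell out the ``short calculation'' (reducing the algebraicity fiber to a based path space via funext), which the paper leaves implicit.
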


\begin{proof}
  Let $A : \UU$ be a type.  A short calculation shows that
  the monad extension $(\Id \,, \Id\da\, A)$ is algebraic.
  The result therefore follows from Corollary \ref{cor:alg-is-fib}.
\end{proof}

\subsection{Uniqueness of the Groupoid Structure}

We now turn to the task of showing the map $\Gamma : \UU \to \ooGrp$
is an equivalence.  Observe that there is a forgetful map
$\Upsilon : \ooGrp \to \UU$ which is given by extracting the type of
objects (Equation \ref{eq:obj-defn}) from the opetopic type underlying
a groupoid $G : \ooGrp$.  It is readily checked that the composite
$\Upsilon \circ \Gamma$ is definitionally the identity, and so what
remains to be shown is that any $G : \ooGrp$ is equivalent to $\Gamma$
applied to its type of objects.

Unwinding the definitions, we find that we are faced with the following
problem: suppose we are given a monad extension $(M \,, M\da)$ as well
as a opetopic type $X : \OpType M$.  Under what hypotheses can we
prove that $X \simeq_{o} \OvrOpType M\, M\da$ (where $\simeq_o$
denotes an appropriate notion of equivalence of opetopic types)?  A
first remark is that the opetopic type $\OvrOpType M\, M\da$ is
completely determined by the algebraic structure of the dependent
monad $M\da$.  Therefore, at a minimum, we must assume that the data
of the opetopic type $X$ is equivalent to the data provided by $M\da$
wherever they ``overlap''.

To see what this means concretely, let us begin at the base of the
sequence, writing $Z = \OvrOpType M\, M\da$ to reduce clutter.  Now,
the family $Z_0 : \Idx M \to \UU$ is, by definition, given by the
family of dependent indices $\Idxd M\da$ of the dependent monad
$M\da$.  On the other hand, without additional hypotheses, the
opetopic type $X$ only provides us with some abstract type family
$X_0 : \Idx M \to \UU$.  Clearly, then, we will need to assume an
equivalence $e_0 : (i : \Idx M) \to \Idxd M\da\, i \simeq X_0\, i$ in
order to have any chance to end up with the desired equivalence of
opetopic types.

Moving on to the next stage, here we find that $Z_1$ is given
by the dependent indices
\[ \Idxd M\da_1 : \Idx M_1 \to \UU \] of the first iteration of the
dependent slice-pullback construction.  Unfolding the definition,
these are of the form
\begin{align*}
  (\Idxd M\da_1)\, (i \,, j \,, c \,, v)
  &= \\
  & \hspace{-2cm} \sum_{(j : \Idxd M\da\, i)} \sum_{(r : j \equiv j')} 
    \sum_{(d : \Cnsd M\da\, c)} (\Typd M\da\, d \equiv \nu)
\end{align*}
With the 4-tuple $(i \,, j \,, c \,, v)$ as in Equation
\ref{eq:slc-idx}.  We notice that much of the data here is redundant:
by eliminating the equality $r$ and the equality relating $\nu$ to the
typing function of $d$, we find that the dependent indices are
essentially just dependent constructors of $M\da$, slightly reindexed.
In other words, a dependent equivalence
\[ e_1 : (i : \Idx M_1) \to \Idxd M\da_1 \simeq_{e_0} X_1 \] over the
previous equivalence $e_0$ amounts to saying that the relations of the
family $X_1$ ``are'' just the dependent constructors of $M\da$ (again,
reindexed according to the typing of their input and output
positions).  As this is again part of the data already provided by the
dependent monad $M\da$, we will additionally need to add such an
equivalence to our list of hypotheses.

To recap: assuming the equivalences $e_0$ and $e_1$ amounts to
requiring that the first two stages of the opetopic type $X$ are
equivalent to the indices and constructors of the dependent monad
$M\da$, respectively.  What structure remains? Well, the dependent
constructors of $M\da$ are equipped with the unit and multiplication
operators $\upetad$ and $\upmud$.  But now, recall from Section
\ref{sec:weak-alg} that if the family of relations $X_1$ extends
further in the sequence to a family $X_2$ and we have a proof
$m_2 : \ismult X_2$, then the family $X_1$ can be equipped with a
multiplicative structure given by the functions $\upeta\alg_{m_2}$ and
$\upmu\alg_{m_2}$ defined there.  This is the case in the current
situation, if we assume that the opetopic type $X$ is fibrant (in
fact, we only need assume that $\Rel X$ is fibrant to make this
statement hold).  Therefore, the last piece of information in order
that $X$ ``completely agrees'' with the dependent monad $M\da$ is that
the equivalence $e_1$ is additionally a \emph{homomorphism}, sending
$\upetad$ to $\upeta\alg_{m_2}$ and $\upmud$ to $\upmu\alg_{m_2}$. Our
theorem now is that this data suffices to prove an equivalence of
opetopic types:

\begin{thm}
  \label{thm:slice-unique}
  Suppose $(M , M\da)$ is a monad extension and $X : \OpType M$
  an opetopic type such that $\Rel X$ is fibrant.  Moreover,
  suppose we are given the data of
  \begin{itemize}
  \item An equivalence $e_0 : (i : \Idx M) \to \Idxd M\da\, i \simeq X_0\, i$
  \item An equivalence $e_1 : (i : \Idx M_1) \to \Idxd M\da_1 \simeq_{e_0} X_1$ over $e_0$
  \item Proofs that $s : \upetad M\da \equiv_{e_0 , e_1} \upeta\alg_{m_2}$ and $t : \upmud M\da \equiv_{e_0 , e_1} \upmu\alg_{m_2}$    
  \end{itemize}
  Then there is an equivalence of opetopic types
  \[ X \simeq_{o} \OvrOpType M\, M\da \]
\end{thm}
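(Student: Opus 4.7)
The plan is to construct the equivalence coinductively, following the coinductive definition of $\OvrOpType M\, M\da$. Unfolding one step gives $\Car(\OvrOpType M\, M\da) = \Idxd M\da$ and $\Rel(\OvrOpType M\, M\da) = \OvrOpType M_1\, M\da_1$, so at each stage I need (a) an equivalence of carrier families and (b) the data required to recurse at the next level. The base carrier equivalence $X_0 \simeq \Idxd M\da$ is precisely $e_0^{-1}$, and the fibrancy of $\Rel(\Rel X)$ required by the next iteration is immediate from $\isfibrant(\Rel X)$. The new base-level equivalence $e'_0$ at stage $1$ is just $e_1$, rewritten along $e_0$ so that its domain becomes $\Idx M_1$ as reindexed through $X_0$ rather than $\Idxd M\da$.

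The heart of the argument is the construction of $e'_1$: an equivalence between $\Idxd M\da_2$ and $X_2$ over $e'_0$. An index in $\Idxd M\da_2$ is, after eliminating the redundant equalities in the pullback-slice (as in the paragraph preceding the theorem), essentially a tree of dependent constructors of $M\da_1$ whose iterated multiplication is prescribed. On the $X$ side, the fibrancy of $\Rel X$ furnishes $m_2 : \ismult X_2$, which endows $X_1$ with the operations $\upeta\alg_{m_2}$ and $\upmu\alg_{m_2}$ defined in Section \ref{sec:weak-alg}; the hypotheses $s$ and $t$ state that under $e_1$ these correspond to $\upetad M\da$ and $\upmud M\da$. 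I would then define $e'_1$ by induction on tree shape: leaves map to $\upeta\alg$-witnesses, internal nodes to $\upmu\alg$-witnesses, and grafting corresponds to $\upgamma$; the inverse direction exploits the contractibility provided by $m_2$ to recover the tree node by node from an arbitrary element of $X_2$.

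The compatibility data $s'$ and $t'$ at the new level assert that $\upetad M\da_1$ and $\upmud M\da_1$ correspond under $e'_1$ to the algebra operations derived from the multiplicativity of $X_3$. These are higher coherences, but they are forced by the same contractibility arguments used in Section \ref{sec:weak-alg} to derive unit and associativity laws from $m_2$: any two candidate inhabitants of a contractible space automatically coincide. Once $s'$ and $t'$ are obtained, the hypotheses at stage $1$ have exactly the same shape as those at stage $0$, and the coinductive step closes; Theorem \ref{thm:slice-alg} ensures that all later slice extensions remain algebraic, providing the rigidity required to keep this induction going.

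I expect the principal obstacle to be the construction of $e'_1$ as an equivalence \emph{over} $e'_0$ with the correct behaviour on the typing functions of both slice monads. The tree induction is conceptually clear, but one must verify carefully that the substitutions along $e_0$ and $e_1$ interact properly with the slice-and-pullback structure, and in particular that $\upmu\alg_{m_2}$ agrees, under $e_1$, with the multiplication of dependent constructors at every grafting step. Once this compatibility is nailed down, the higher-level compatibilities and the overall coinductive equivalence fall out essentially for free.
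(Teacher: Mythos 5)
Your overall strategy is the one the paper intends: the proof (deferred to the appendix of the extended version) proceeds coinductively, with $e_0$ supplying the carrier equivalence, $e_1$ (transported along $e_0$ to fix up the mismatch between $\Slice (\Pb M\, X_0)$ and $\Slice (\Pb M\, (\Idxd M\da))$) becoming the new $e'_0$, the new $e'_1$ built by induction on trees using $m_2$ together with $s$ and $t$, and Theorem~\ref{thm:slice-alg} supplying, via Lemma~\ref{lem:alg-to-fib}, the contractibility on the $M\da$ side at every later stage. Your observation that both sides of $e'_1$ are ``contractible relations'' over matching bases --- dependent trees with free output are contractible by algebraicity of $(M_1, M\da_1)$, and $\sum_{x} X_2(\dots)$ is contractible by $m_2$ --- is exactly the right mechanism, with $s$ and $t$ guaranteeing that the centers of contraction correspond.

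The one step you dispatch too quickly is the regeneration of the coherences $s'$ and $t'$, which you claim ``fall out essentially for free'' because ``any two candidate inhabitants of a contractible space automatically coincide.'' The two candidates being compared --- the image of $\upetad M\da_1$ under $e'_1$ and the operation $\upeta\alg$ obtained from the next multiplicativity witness --- are two elements of a single \emph{fiber} $X_2((i_1, x_1), \dots)$, whereas what is contractible is the \emph{total space} $\sum_{x_1'} X_2((i_1, x_1'), \dots)$. Contractibility of a total space identifies the two pairs, i.e.\ it yields a loop on $x_1$ together with a dependent path over that loop; it does not by itself identify the two fiber elements (compare $\sum_{x : S^1} (\mathsf{base} \equiv x)$, which is contractible with a highly non-propositional fiber). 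The step goes through only because the equality actually required for the coinductive hypothesis at the next stage is the equality of the \emph{pairs} --- a dependent equality over the identification of the $X_1$-components, which is what the paper's deliberately informal $\equiv_{e_0, e_1}$ is standing in for. You should therefore formulate $s'$ and $t'$ in that dependent form; stated as naked fiberwise equalities, the contractibility argument does not suffice and the coinduction would not close.
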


We have taken some liberties in the presentation of this theorem
(strictly speaking, we have not stated precisely in what sense the
second equivalence $e_1$ is ``over'' the equivalence $e_0$, nor
precisely what equality is implied by symbol the $\equiv_{e_0,e_1}$)
but these omissions can be made perfectly rigorous by standard
techniques, and we feel the statement above conveys the essential
ideas perhaps more clearly than a fully elaborated statement, which
would require a great deal more preparation, not to mention space.
See the appendix of the extended version of this article for a proof
\cite[{Theorem \ref{thm:slice-unique}}]{allioux2021types}.

We at last obtain our desired equivalence:

\begin{thm}
  \label{thm:types-are-oogrps}
  The map
  \[ \Gamma : \UU \to \ooGrp \]
  is an equivalence.
\end{thm}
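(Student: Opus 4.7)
The plan is to use Theorem \ref{thm:slice-unique} to produce a quasi-inverse to $\Gamma$. Since $\Upsilon \circ \Gamma$ is definitionally the identity on $\UU$, it suffices to construct, for each $G = (X, \phi) : \ooGrp$ with $A := X_0\, \ttt$, an equivalence of opetopic types $X \simeq_o \OvrOpType \Id\, (\Id\da\, A)$. Applying Theorem \ref{thm:slice-unique} with $M := \Id$ and $M\da := \Id\da\, A$ reduces this to supplying the four data $e_0$, $e_1$, $s$, $t$; note that $\Rel X$ is fibrant by the $\relfib$ field of $\phi$, so the hypothesis of the theorem is met.

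The equivalence $e_0 : \Idxd\, (\Id\da\, A)\, \ttt \simeq X_0\, \ttt$ is trivial since both sides reduce to $A$, so we may take $e_0$ to be the identity. The main content is the construction of $e_1$. Unfolding $(\Id\da A)_1 = \Sliced (\Pbd (\Id\da A)\, (\lambda\, j\, k \to j \equiv k))$, the level-one dependent indices $\Idxd M\da_1$ reduce, after contracting the redundant propositional equalities enforcing compatibility of typings, to triples consisting of $x, y : A$ together with an identity proof $x \equiv y$. Meanwhile, after unfolding the corresponding iterated slice-pullback on the side of $X$, the matching fiber of $X_1$ is the arrow type $\mathsf{Arrow}\, x\, y$ of Equation \ref{eq:arrow-defn}. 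I would then produce a fiberwise equivalence $(x \equiv y) \simeq \mathsf{Arrow}\, x\, y$ as follows: define the map $(x \equiv y) \to \mathsf{Arrow}\, x\, y$ by path induction, sending $\refl_x$ to the identity arrow $\upeta\alg\, x : \mathsf{Arrow}\, x\, x$ obtained from the multiplicativity of $X_2$ (furnished by $\carmult$ applied to $\relfib\, \phi$). Since $\sum_{y : A} \mathsf{Arrow}\, x\, y$ is contractible (by $\carmult\, \phi$ at the unit constructor decorated by $x$) and $\sum_{y : A} (x \equiv y)$ is contractible (based paths), the induced map on total spaces is an equivalence of contractible types, so the fiberwise map is itself an equivalence.

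The compatibility datum $s$, asserting that $e_1$ carries $\upetad\, (\Id\da\, A)$ to $\upeta\alg$, holds by construction of $e_1$, since $\refl_x$ was sent to $\upeta\alg\, x$ by definition. The datum $t$, asserting that $e_1$ carries concatenation of equalities to $\upmu\alg$, reduces by a double path induction on the underlying identifications to the case $\refl$ against $\refl$, where both sides coincide with $\upeta\alg$ by $s$ together with the unit laws for the algebraic structure at the next level.

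The main obstacle is not the abstract strategy but the careful bookkeeping required to make sense of an equivalence $e_1$ dependent ``over'' $e_0$ and the propositional equalities $\equiv_{e_0, e_1}$ required for $s$ and $t$. Unfolding $\Idxd M\da_1$ involves several redundant equality witnesses coming from the typing laws of the iterated slice-pullback, and exhibiting the identity type $x \equiv y$ as the genuine content of these indices demands a meticulous, though essentially routine, sequence of contractions. Once these are handled, the substantive mathematical content reduces to the contractibility argument sketched above, which is the standard homotopy-type-theoretic manifestation of ``based path spaces are contractible.''
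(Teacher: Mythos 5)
Your proposal follows the paper's proof exactly: apply Theorem~\ref{thm:slice-unique} with $M = \Id$ and $M\da = \Idd A$, take $e_0$ to be the identity, obtain $e_1$ by identifying $x \equiv y$ with $\mathsf{Arrow}\,x\,y$, and verify $s$ and $t$ by path induction. The only difference is presentational: where the paper cites the fundamental theorem of identity types (\cite[Theorem 5.8.2]{hottbook}) to get $e_1$, you inline its proof via contractibility of the two total spaces $\sum_{y}(x \equiv y)$ and $\sum_{y}\mathsf{Arrow}\,x\,y$.
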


\begin{proof}
  Given $G : \ooGrp$, we let $A : \UU$ be its type of objects.  We now
  apply Theorem~\ref{thm:slice-unique} with $M = \Id$ and
  $M\da = \Idd A$.  We may take $e_0$ to be the identity.  The
  equivalence $e_1$ is a consequence of \cite[Theorem 5.8.2]{hottbook}
  and the required equalities are a straightforward calculation.
\end{proof}


\section{Conclusion}
\label{sec:concl}

We have presented an approach to defining higher coherent structures
in homotopy type theory by equipping type theory with a primitive set
of structures collected into a universe $\MM$ of polynomial monads,
and demonstrated that this approach can be used to prove non-trivial
theorems about these structures.  In this brief final section, we
compare some related approaches and survey some of the possible
directions and applications.

\subsection{Future Directions}
\label{sec:future-directions}

\subsubsection{Symmetric Structures}

A natural class of structures which escapes the capabilities of our
current approach is that of \emph{symmetric structures}, that is,
those which incorporate higher analogs of commutativity.  Examples
would include $\mathbb{E}_{\infty}$ groups and monoids, symmetric
monoidal categories, and general $\infty$-operads and their algebras.

\subsubsection{Higher Category Theory}

As we have seen, one higher structure which \emph{is} amenable to
treatment by our methods is that of an $\infty$-category.  An obvious
point to follow up on, then, is how much of the well developed theory
of $\infty$-categories can be formalized in this manner.

\subsubsection{A General Theory of Structures}

As we have mentioned in the introduction, we see the present work as a
first step towards a general theory of types and structures.  And
though we feel certain that at least some of the ideas of the present
work will carry over to such a theory, a complete picture of the basic
principles remains to be understood.  Moreover, a careful
investigation of the interaction of our techniques with univalent
implementations of type theory (such as \emph{cubical} type theory)
also remains for future work.

Accompanying such a general theory, we anticipate a deeper
investigation of the meta-theoretic properties of our proposed
approach.  For example, the Agda implementation is limited by the
expressivity of rewrite rules, and complicated by the explicit
universe construction, while a proper extension of MLTT would allow
for the investigation of meta-theoretic properties like decidability
of type checking and strong normalization using techniques like
normalization-by-evaluation (and potentially settling the conjecture
of \ref{sec:mnd-struct}). Furthermore, we have not touched at all on
the potential models of our system, topic which deserves we feel
deserves careful attention.



\bibliographystyle{IEEEtran}
\bibliography{IEEEabrv,refs}

\appendix

\section{Proofs}

\begin{lem}[Strong confluence for the \ref{rewrite:muetar}/\ref{rewrite:mumu} pair]
  \label{proof:muetarmumu-proof}
  We fix a monad $M$ and show that the two rules can always be joined:
  if $t \leadsto_{\text{\ref{rewrite:muetar}}} t'$ and 
  $t \leadsto_{\text{\ref{rewrite:mumu}}} t''$
  then $t'' \leadsto^* t'$.
\end{lem}

\begin{proof}
  In case of overlap, the first application reduces in one step:
  \begin{align*}
    \upmu&\ (\upmu\ c\ \delta)\ (\lambda\ p \to \upeta\ (\Typ (\upmu\ c\ \delta)\ p)) \leadsto_{\text{\ref{rewrite:muetar}}} \upmu\ c\ \delta
  \end{align*}

  We show that applying the \ref{rewrite:mumu} rule instead results in the same term,
  where only one rule applies on a given subterm at each step:
  \begin{align*}
    \upmu&\ (\upmu\ c\ \delta)\ (\lambda\ p \to \upeta\ (\Typ (\upmu\ c\ \delta)\ p)) \leadsto_{\text{\ref{rewrite:mumu}},\beta} \\
    &\upmu\ c\ (\lambda\ p \to \upmu\ (\delta\ p)\ 
        (\lambda\ q \to \upeta\ (\Typ (\upmu\ c\ \delta)~(\mupos\ p\ q)))) \\
    &\leadsto_{\text{\ref{rewrite:muTyp}}} \upmu\ c\ (\lambda\ p \to \upmu\ (\delta\ p)\ 
        (\lambda\ q \to \\
        &\quad\quad \upeta\ (\Typ (\delta\ (\muposfst c\ \delta\ (\mupos\ p\ q)))\\
        &\quad\quad (\mupossnd c\ \delta\ (\mupos\ p\ q)))))\\
    &\leadsto_{\text{\eqref{rewrite:muposfst},\ref{rewrite:mupossnd}}}\\
    &\quad\quad\upmu\ c\ (\lambda\ p \to \upmu\ (\delta\ p)\
        (\lambda\ q \to \upeta\ (\Typ (\delta\ p)\ q)))\\
    &\leadsto_{\text{\ref{rewrite:muetar}}}\upmu\ c\ (\lambda\ p \to \delta\ p))) \leadsto_{\eta} \upmu\ c\ \delta
  \end{align*}  
\end{proof}

\begin{thm-apdx}{Theorem \ref{thm:slice-alg}}
  Let $(M \,, M\da)$ be a monad extension.  Then slice extension
  $(M_1 \,, M\da_1)$ is algebraic.
\end{thm-apdx}

\begin{proof}
  Unfolding the definitions, we have that $\Idx M_1$ is the
  iterated sum:
  \begin{equation}
    \label{eq:slc-idx}
    \begin{aligned}
      &\sum_{(i : \Idx M)} \sum_{(j : \Idxd M\da\, i)} 
      \sum_{(c : \Cns M\, i)} \\
      &\hspace{1cm} (p : \Pos M c) \to \Idxd M\da\, (\Typ M\, c\, p)
    \end{aligned}
  \end{equation}
  and for a given $(i,j,c,\nu)$, the type
  $\Idxd M\da_1\, (i,j,c,\nu)$ of dependent indices is itself an
  iterated sum of the form:
  \begin{align*}
    \sum_{(j : \Idxd M\da\, i)} \sum_{(r : j \equiv j')} 
    \sum_{(d : \Cnsd M\da\, c)} \Typd M\da\, d\ \equiv \nu
  \end{align*}
  In other words, if the tuple $(i,j,c,\nu)$ is seen as a constructor
  in $c$ with inputs and output decorated by elements of $\Idxd M\da$,
  then a element $(j',r,d,\tau)$ lying over it is simply a constructor
  of the dependent monad $M\da$ over $c$ whose intrinsic typing
  information (that is, indices assigned by the underlying dependent
  polynomial of $M\da$) matches the decoration of $c$.

  The constructors follow a similar pattern, but now assembled into
  trees: a constructor of the base monad $M_1$ is a tree $\sigma$
  whose internal nodes carry the additional information of a
  decoration of their incoming and outgoing edges  by
  elements of $\Idxd M\da$.  A constructor of $M\da_1$ lying over
  $\sigma$ is itself a tree $\sigma\da$ built from dependent
  constructors of $M\da$, and carrying the additional information of
  equalities witnessing that the typing information of each node
  agrees with the decoration of the node it lies over in $\sigma$, in
  the sense of the previous paragraph. 

  Now, the situation of the theorem is the following: we are given a
  4-tuple $(i,j,c,\nu) : \Idx M_1$ and a tree
  $\sigma : \Cns M_1\, (i,j,c,\nu)$ together with a decoration of this
  tree $\phi$ which assigns to each of the positions of $\sigma$, that
  is, to each of its nodes, an index living in $\Idxd M\da_1$, which
  as we have seen, means a single dependent constructor together with
  proofs that its typing information matches the local decoration of
  the node in $\sigma$. To prove the theorem, we must produce in this
  situation three pieces of data:
  \begin{itemize}
  \item An index $\omega : \Idxd M\da_1 (i,j,c,\nu)$
  \item A dependent tree $\sigma\da : \Cnsd M\da_1\, \omega$
  \item A proof that $\zeta : \Typd M\da_1\, \sigma\da\, \equiv \phi$
  \end{itemize}
  And moreover, we must show that the space of all such data is
  contractible.

  We will construct this data by induction on the tree $\sigma$.  In
  the case that $\sigma = \lf (i \,, j)$  we have necessarily that
  $c = \upeta\, M\, i$ and $\nu = \etadec j$, and we may take
  \begin{align*}
    \omega &= (j \,, \refl\, , \upetad\, M\da\, j \, , \etadecd \refl) \\
    \sigma\da &= \lfd\, (j \, , \refl) \\
    \zeta &= \mathsf{funext}\, \botelim
  \end{align*}

  For the inductive case, that is, when
  $\sigma = \nd c\, \delta\, \epsilon$, evaluating the decoration
  $\phi$ on the
  position corresponding to the base of our tree, that is
  $\phi\, (\inl \ttt)$, we obtain a 4-tuple $(j',r,d,\tau)$ where
  $j' : \Idxd M\da\, i$, $r : j' \equiv j$, $d : \Cnsd M\da\, c\, j'$ and
  $\tau : \Typd M\da d \equiv \nu$.  Then we define
  \begin{align*}
    \omega &= (j' \,, r \,, \upmud M\da\, d\, \delta\da' \,, \tau') \\
    \sigma\da &= \ndd (d \,, \tau)\, (\lambda p \to \delta\da' \, p \,, \tau'\, p \, q) \, \epsilon\da' \\
    \zeta &= \beta \cdot \zeta'
  \end{align*}
  in which the decorations $\delta\da'$ and $\epsilon\da'$, as well as
  the typing compatibilities $\tau'$ and $\zeta'$ are obtained from
  the induction hypothesis, and $\beta$ is a coherence asserting that
  the typing information of a constructor is unchanged when it is
  transported along an equality of indices.

  In summary: the decoration $\phi$ assigns to each node of the tree
  $\sigma$ a dependent constructor lying over the constructor
  occupying that node, together with compatibility information about
  the typing.  By induction on $\sigma$, we accumulate these dependent
  constructors into dependent tree $\sigma\da$ lying over $\sigma$,
  and at the same time accumulate witnesses that the typing of the
  nodes in the resulting tree agrees with $\phi$, which it clearly
  does by construction.
  
  To show uniqueness, we need to see that any other triple
  $(\omega',\sigma\da',\zeta')$ is equal to the one defined above.
  The proof again proceeds by induction on the given tree $\sigma$,
  and the crucial point is that all the data of this 3-tuple can be
  eliminated away.  To see this, let us expand
  $\omega' = (j',r,d,\tau)$.  Now, $r : j' \equiv j$, so we may
  eliminate and assume $j' = j$. Similarly,
  $\zeta' : \Typd M\da_1 \sigma\da' \equiv \phi$ is just an equality
  between the typing function for $\sigma\da'$ and the original
  decoration $\phi$.  Since $\phi$ was a free parameter to begin with,
  we can freely eliminate $\zeta'$ at this point and assume that
  $\phi$ actually \emph{is} typing function $\sigma\da'$.  We are left
  with just the dependent tree $\sigma\da'$ and the remaining two
  elements of $\omega'$, namely $(d, \tau)$.  But now, as we proceed
  by induction on $\sigma$, we can \emph{also} proceed by induction on
  the dependent tree $\sigma\da'$.  And in fact, this completely
  forces the values of $(d, \tau)$ via the indexing of dependent
  trees: for example, when $\sigma = \lf (i \,, j)$, the only valid
  possibility for $\sigma\da'$ is $\lfd (j \,, \refl)$, and in this
  case, we necessarily have
  $(d,\tau) = (\upetad M\da (j \,, \refl) \,, \etadecd \refl)$ as
  required. The $\nd$ case is similar, even if the path algebra is
  considerably more involved.
\end{proof}

\begin{thm-apdx}{Theorem \ref{thm:slice-unique}}
  Suppose $(M , M\da)$ is a monad extension and $X : \OpType M$
  an opetopic type such that $\Rel X$ is fibrant.  Moreover,
  suppose we are given the data of
  \begin{itemize}
  \item An equivalence $e_0 : (i : \Idx M) \to \Idxd M\da\, i \simeq X_0\, i$
  \item An equivalence $e_1 : (i : \Idx M_1) \to \Idxd M\da_1 \simeq_{e_0} X_1$ over $e_0$
  \item Proofs that $s : \upetad M\da \equiv_{e_0 , e_1} \upeta\alg_{m_2}$ and $t : \upmud M\da \equiv_{e_0 , e_1} \upmu\alg_{m_2}$    
  \end{itemize}
  Then there is an equivalence of opetopic types
  \[ X \simeq_{o} \OvrOpType M\, M\da \]
\end{thm-apdx}

\begin{proof}
  The proof is by coinduction, and so it suffices to produce elements
  $e_0', e_1', s', t'$ whose types are the same as those of the
  corresponding elements in the hypotheses, but modified by the
  transformations
  \begin{align*}
    (M , M\da) &\quad \mapsto \quad (M_1 , M\da_1) \\
    X &\quad \mapsto  \quad \Rel X
  \end{align*}
  Next, we observe that it follows from univalence that the
  equivalences $e_0$ and $e_1$ can be eliminated from the hypotheses
  since the opetopic type $X$ is abstract. In other words, we are free
  to assume that $X_0 = \Idxd M\da$ and $X_1 = \Idxd M\da_1$ so that
  the first remaining abstract family of $X$ is
  \[ X_2 : \Idx M_2 \to \UU \] Moreover, after making these
  simplifications, the equalities $s$ and $t$ take the following
  types:
  \begin{align*}
    &s : (i : \Idx PM)\, (j : \Idxd PM\da i) \\
    &\hspace{.3cm} \to (j , \upetad PM\da\, j) \equiv \upeta\alg_{m_2}\, (\fst i)\, (\snd i) \\
    \\
    &t : (i : \Idx PM)\, (c : \Cns PM i) \\
    &\hspace{.3cm} \to (\delta : \Pos PM\, c) \to \Cns PM\, (\Typ PM\, c\, p) \\
    &\hspace{.3cm} \to (j : \Idxd PM\da\, i) (d : \Cnsd PM\da\, j\, c) \\
    &\hspace{.3cm} \to (\delta\da : (p : \Pos PM\, c) \to \Cnsd PM\da\, (\Typd PM\da\, d\, p)\, (\delta\, p)) \\
    &\hspace{.3cm} \to (j , \upmud PM\da\, d\, \delta\da) \equiv \\
    &\hspace{.8cm} \upmu\alg_{m_2}\, (\fst i)\, (\fst c)\, (\snd c)\, \delta\, (\snd i)\, (j \,, d)\, \\
    &\hspace{1.2cm} (\lambda p \to (\Typd M\da, (\fst d) p \,, \snd d\, p)\,, \delta\da\, p)
  \end{align*}
  where we have set
  \begin{align*}
    PM &= \Pb M\, (\Idxd M\da) \\
    PM\da &= \Pbd M\da\, (\lambda\, j\, k \to j \equiv k)
  \end{align*}
  to simplify the notation.

  Now, we may take $e_0'$ to be the identity, since $e_1$ is the
  identity.  It therefore remains to construct the equivalence $e_1'$
  of type
  \[ (i : \Idx M_2) \to \Idxd M\da_2\, i \simeq X_2\, i \] and check
  the two required equations.  We claim that it in fact suffices to
  construct a \emph{map}
  \[ e : (i : \Idx M_2) \to \Idxd M\da_2\, i \to X_2\, i \] and that any
  such map is necessarily an equivalence.  This follows from the fact
  that both $\Idxd M\da_2$ and $X_2$ are multiplicative relations.
  The former by Theorem \ref{thm:slice-alg} and the latter by
  assumption.

  Now, unfolding all the definitions and eliminating the redundancies,
  we find that the map $e$ depends on the data of a tree
  $\sigma : \Cns M_1\, (i,j,c,\nu)$ where $(i,j,c,\nu) : \Idx M_1$, a
  decoration $\phi$ of the nodes of this tree in the family
  $\Idxd M\da_1$, a dependent tree $\sigma\da$ living over $\sigma$
  and a proof $\phi\da$ that the typing function of $\sigma\da$ agrees
  with the data determined by $\phi$ (the situation is analogous to
  that encountered in the proof of Theorem \ref{thm:slice-alg}).  We
  may therefore proceed by simultaneous induction on $\sigma$ and
  $\sigma\da$.
  
  When $\sigma = \lf (i , j)$ and $\sigma\da = \lf (j , \refl)$, we
  must produce an element of $X_2$ of type:
  \[ X_2 (i_s , j_s , \lf (i , j) , \phi) \]
  where
  \begin{align*}
    i_s &= (i \,, j \,, \upeta\, M\, i \,, \etadec j) \\
    j_s &= (j \,, \refl\, , \upetad\, M\da\, j \, , \etadecd \refl) 
  \end{align*}
  on the other hand, we \emph{have} an element
  \begin{align*}
    &x_2 : X_2\, (i_s \,, (\upeta\alg_{m_2} i_s\, (\lf (i , j))\, \phi) \,, \lf (i , j) \,, \phi) \\
    &x_2 = \snd (\ctr (m_2\, (\lf (i \,, x))\, \phi)) 
  \end{align*}
  since $\upeta\alg_{m_2}$ is defined using the multiplicativity of the
  relation $X_2$.  But the term $s\, (i \,, j)\, (j \,, \refl)$
  obtained from the hypothesis $s$ now gives
  \[ j_s \equiv \upeta\alg_{m_2} i_s\, (\lf (i , j))\, \phi \] and we
  so we obtain the desired result by transporting $x_2$ along this
  equality.

  Before moving on to the inductive case of a node, let us pause to
  anticipate how we intend to prove the equations $s'$ and $t'$, as
  this will have serious consequences for how we proceed.  These
  equations will now be over the equivalence $e_1'$ which we are in
  the process of constructing.  Specifically, they take the form
  \begin{align*}
    &s' : (i : \Idx PM_1)\, (j : \Idxd PM\da_1 i) \\
    &\hspace{.3cm} \to e_1' (i , \upeta PM_1 i)\, (j , \upetad PM\da_1\, j) \equiv \\
    &\hspace{1.2cm} \upeta\alg_{m_3}\, (\fst i)\, (\snd i) \\
    \\
    &t' : (i : \Idx PM_1)\, (c : \Cns PM_1 i) \\
    &\hspace{.3cm} \to (\delta : \Pos PM_1\, c) \to \Cns PM_1\, (\Typ PM_1\, c\, p) \\
    &\hspace{.3cm} \to (j : \Idxd PM\da_1\, i) (d : \Cnsd PM\da_1\, j\, c) \\
    &\hspace{.3cm} \to (\delta\da : (p : \Pos PM_1\, c)\\
    &\hspace{1.2cm} \to \Cnsd PM\da_1\, (\Typd PM\da_1\, d\, p)\, (\delta\, p)) \\
    &\hspace{.3cm} \to e_1' (i , \upmu PM_1\, c\, \delta)\, (j , \upmud PM\da_1\, d\, \delta\da) \equiv \\
    &\hspace{.8cm} \upmu\alg_{m_3}\, (\fst i)\, (\fst c)\, (\snd c)\, \delta\, (\snd i)\, (j \,, d)\, \\
    &\hspace{1.2cm} (\lambda p \to (\Typd M_1\da, (\fst d) p \,, \snd d\, p)\,, \delta\da\, p)
  \end{align*}
  where now each of monads have advanced by a single slice
  \begin{align*}
    PM_1 &= \Pb M_1\, (\Idxd M_1\da) \\
    PM\da_1 &= \Pbd M\da_1\, (\lambda\, j\, k \to j \equiv k)
  \end{align*}
  Furthermore, the left side of each equality now includes an
  application of the equivalence $e_1'$, and the right side uses the
  unit and multiplicative operators $\upeta\alg_{m_3}$ and
  $\upmu\alg_{m_3}$ corresponding to the fact that the \emph{next
    family} $X_3$ is also multiplicative.

  What this means is that, if we want to ensure that the equivalence
  $e_1'$ sends the required elements to applications of the unit
  $\upeta\alg_{m_3}$ and multiplication $\upmu\alg_{m_3}$ functions,
  we will need to \emph{use} these functions in the definition of
  $e_1'$ and furthermore, we need to use them \emph{both}.  But since
  the unit case $\upeta\alg_{m_3}$ in the next dimension corresponds
  to a \emph{corolla}, that is, a tree with a single node in the
  present dimension, we will need to now make a case split depending
  on whether or not the node in the tree $\sigma$ we are recursing on
  has descendants or not.  This is possible as soon as the monad $M$
  we are considering is \emph{finitary} in the sense that the type of
  positions is merely equivalent to finite type.  In this case, the
  the property of a tree being a corolla becomes decidable, and so we
  can make the required case split.  This is not a problem for our
  intended application since it is easily checked that the identity
  monad has this property and, moreover, that it is inherited both by
  pulling back and slicing.  Consequently, we will freely proceed now
  under this hypothesis.

  Now, returning to the definition of our equivalence $e_1'$.  We are
  in the inductive case so that we have
  $\sigma = \nd (c \,, \nu)\, \delta\, \epsilon$ and
  $\sigma\da = \nd (c\da \,, \nu\da)\, \delta\da\, \epsilon\da$. Our
  goal is to produce an element of type
  \[ X_2 (i_s , j_s , \nd (c \,, \nu)\, \delta\, \epsilon , \phi) \]
  where now
  \begin{align*}
    i_s &= (i \,, j \,, \upmu\, PM\, (c \,, \nu)\, \delta) \\
    j_s &= (j \,, \refl \,, \upmud\, PM\da\, (c\da \,, \nu\da)\, \delta\da) 
  \end{align*}
  
  We now make a case distinction based on whether the tree $\sigma$ is
  a corolla or not.  In case that it is, one can prove equalities
  \begin{align*}
    &u : (i \,, j \,, \upmu\, PM\, (c \,, \nu)\, \delta) \equiv (i \,, j \,, c \,, \nu) \\
    &v : (j \,, \refl \,, \upmud\, PM\da\, (c\da \,, \nu\da)\, \delta\da) \equiv \phi\, (\inl \ttt) 
  \end{align*}
  But then the result follows from transporting the term
  \[ \upeta\alg_{m_3}\, (i \,, j \,, c \,, \nu)\, (\phi\, (\inl \ttt)) \]
  in the fibration $X_2$ along the equalities $u$ and $v$.

  If $\sigma$ is \emph{not} a corolla, but rather has proper
  descendants, then we proceed as follows: by appealing to the
  induction hypothesis, we obtain a family of elements of $X_2$
  parameterized by the positions $p : \Pos M\, c$.  These assemble,
  together with the \emph{witness} for the binary multiplication of
  $c\da$ and $\delta\da$ under $\upmu\alg_{x_2}$, into the arguments
  for $\upmu\alg_{x_3}$, which has the correct type up to a transport
  along the equality given by our hypothesis $t$.

  This completes the definition of $e_1'$ and, as the reader can see,
  we have thus achieved our goal of using the multiplicative operators
  $\upeta\alg_{m_3}$ and $\upmu\alg_{m_3}$ in the construction of the
  required equivalence.  It remains to check the equations $s'$ and
  $t'$ above, a long calculation which we will not reproduce here.  It
  is not hard to see that, up to some path algebra, $s'$ is by
  definition.  Verifying $t'$ is slightly more involved: in this case,
  one must proceed by induction on the trees occurring in the
  arguments $c$ and $c\da$, again splitting into three cases: that of
  a leaf, a corolla, and a tree with at least 2 nodes.  Additionally,
  during the course of the induction, one uses that the operators
  $\upeta\alg_{m_3}$ and $\upmu\alg_{m_3}$ are \emph{themselves}
  associative and unital under the assumption that the opetopic type
  $X$ was fibrant.
\end{proof}


\end{document}